\documentclass[openany, amssymb, psamsfonts]{amsart}
\usepackage{mathrsfs,comment}
\usepackage[usenames,dvipsnames]{color}
\usepackage{tikz-cd}
\usepackage{tabularx}
\usepackage[normalem]{ulem}
\usepackage{url}
\usepackage[all,arc,2cell]{xy}
\UseAllTwocells
\usepackage{enumerate}
\usepackage{amsmath, amssymb, amsfonts, mathtools, bm}
\DeclareMathOperator*{\argmax}{arg\,max}
\usepackage{graphicx, float, subcaption, booktabs, array, multirow}
\usepackage{setspace}
\usepackage{enumitem}
\graphicspath{{paper_ready}}
\usepackage{hyperref}  
\hypersetup{%
  bookmarksnumbered=true,%
  bookmarks=true,%
  colorlinks=true,%
  linkcolor=blue,%
  citecolor=blue,%
  filecolor=blue,%
  menucolor=blue,%
  pagecolor=blue,%
  urlcolor=blue,%
  pdfnewwindow=true,%
  pdfstartview=FitBH}

\def\makeautorefname#1#2{\expandafter\def\csname#1autorefname\endcsname{#2}}
\def\equationautorefname~#1\null{(#1)\null}
\makeautorefname{footnote}{footnote}%
\makeautorefname{item}{item}%
\makeautorefname{figure}{Figure}%
\makeautorefname{table}{Table}%
\makeautorefname{part}{Part}%
\makeautorefname{appendix}{Appendix}%
\makeautorefname{chapter}{Chapter}%
\makeautorefname{section}{Section}%
\makeautorefname{subsection}{Section}%
\makeautorefname{subsubsection}{Section}%
\makeautorefname{theorem}{Theorem}%
\makeautorefname{thm}{Theorem}%
\makeautorefname{cor}{Corollary}%
\makeautorefname{lem}{Lemma}%
\makeautorefname{prop}{Proposition}%
\makeautorefname{pro}{Property}
\makeautorefname{conj}{Conjecture}%
\makeautorefname{defn}{Definition}%
\makeautorefname{notn}{Notation}
\makeautorefname{notns}{Notations}
\makeautorefname{rem}{Remark}%
\makeautorefname{quest}{Question}%
\makeautorefname{exmp}{Example}%
\makeautorefname{ax}{Axiom}%
\makeautorefname{claim}{Claim}%
\makeautorefname{ass}{Assumption}%
\makeautorefname{asss}{Assumptions}%
\makeautorefname{con}{Construction}%
\makeautorefname{prob}{Problem}%
\makeautorefname{warn}{Warning}%
\makeautorefname{obs}{Observation}%
\makeautorefname{conv}{Convention}%

\newtheorem{thm}{Theorem}[section]

\newtheorem{prop}{Proposition}[section]

\theoremstyle{definition}
\newtheorem{defn}{Definition}[section]

\makeatletter
\let\c@obs=\c@thm
\let\c@cor=\c@thm
\let\c@prop=\c@thm
\let\c@lem=\c@thm
\let\c@prob=\c@thm
\let\c@con=\c@thm
\let\c@conj=\c@thm
\let\c@defn=\c@thm
\let\c@notn=\c@thm
\let\c@notns=\c@thm
\let\c@exmp=\c@thm
\let\c@ax=\c@thm
\let\c@pro=\c@thm
\let\c@ass=\c@thm
\let\c@warn=\c@thm
\let\c@rem=\c@thm
\let\c@sch=\c@thm
\let\c@equation\c@thm
\numberwithin{equation}{section}
\makeatother

\title[Topological Compression for Forecasting in Prediction Markets]
{Topological Compression of Prediction Market Belief States for Event Forecasting}

\author[A. Jerfy]{Aadit Jerfy}
\address{University of Chicago Department of Mathematics}
\date{July 2026}

\begin{document}

\vspace*{-1cm}

\begin{abstract} 

Topological Data Analysis (TDA) and Persistent Homology (PH) have been utilized in numerous fields$-$from protein modeling to financial analysis$-$to understand the underlying structure of data. In this paper, we incorporate TDA and PH into forecasting models over a prediction market universe to determine whether the belief state of such markets forms an underlying structure whose shape can provide predictive insight. We conclude that single-parameter PH is able to improve predictive ability and calibrated accuracy of linear forecasting models, while two-parameter multipersistence produces similar benefits for nonlinear models. 

\end{abstract}

\maketitle

\tableofcontents

\section{Introduction}
The recent exponential increase in volume on prediction markets has created a natural setting to study the forecasting ability of dispersed beliefs. In such markets, contracts pay out according to the resolution of a future event, with the price of a YES contract commonly interpreted as the market-implied probability of an event. Hence, prediction markets transform opinions and information into a continuously updating probability estimate. 

Under this view, prediction markets are both individual forecasting tools and components of a larger belief system comprised of markets sharing underlying factors that shift price. The motivation of this paper is to question whether that broader belief environment can be compressed in a way that improves forecasting. The simplest forecast for a binary market $B$ is its own implied probability, but if related markets share underlying drivers then there might be additional information in the prices of other markets relating to the resolution of $B$ that is not reflected in its own price. Prior work has studied prediction market accuracy and market-implied probabilities as forecasting instruments \cite{CFTC2026,PolymarketAccuracy}, while separate papers have applied PCA and topological data analysis to financial and high-dimensional data \cite{TradingInterviewPCA,PunXiaLee2018,Blumberg2014}. This paper connects these ideas by applying linear and topological compression directly to prediction market belief states.

The first compression method studied is Principal Component Analysis (PCA) \cite{Pearson1901}, which provides a linear benchmark and asks whether the family-level belief state contains an underlying linear structure relevant to forecasting. 

The natural next step is to ask whether the relevant structure is geometric or purely linear. Topology allows us to describe structure that is not tied to any particular coordinate system, focusing on relationships within data including connectedness, loops, and higher-order organization. Its origins can be traced back to Euler's famous solution to the \textit{Königsberg Bridge Problem}, and since then has become a broad mathematical framework for the study of topological invariants (Definition~\ref{def:topological-invariant}).

The second compression method we utilize is persistent homology (PH), which is designed to detect geometric structure in data across increasing scales. We utilize PH on top of linear PCA models to determine whether topological features of correlated market neighborhoods contain additional forecasting information not captured by initial linear compression.

We also utilize two-dimensional persistent homology coupled with nonlinear benchmarks including XGBoost and Random Forest, which resulted in the strongest overall forecasting models. 

Section 2 introduces relevant background on prediction markets, PCA, decision tree-based learning methods, and persistent homology. Section 3 explores the data pipeline, active-set belief-state construction, PCA benchmark, local Vietoris-Rips topology, multipersistence inspired features, and forecasting models. Section 4 covers empirical results, model comparisons, calibration, regime analysis, and robustness. Section 5 discusses interpretation of these results and limitations, and section 6 concludes with a summary of the implications of the observations on the structure of prediction market belief dynamics.

\section{Background}

\subsection{Prediction Markets}
Since the 2024 presidential election, there has been an influx of activity on platforms such as Kalshi and Polymarket. They support \textit{prediction markets}: markets on which the public is able to trade \textit{event contracts}\footnote{An event contract is a contract whose payoff depends on the resolution of a future event. In the binary case considered here, the contract resolves according to whether the event occurs or does not occur.} -- binary YES / NO contracts that derive their value from the outcome (resolution) of an event \cite{CFTC2026}. For each binary market \(i\) and timestamp \(t\), let
\[
q_{i,t}^{YES}\in[0,1]
\]
denote the price, in dollars, of one YES contract. We define the market-implied probability of event occurrence by
\[
p_{i,t}:=q_{i,t}^{YES}.
\]
This value is interpreted as the market-implied probability that the event occurs, and gives the natural baseline forecast of the market. For example, the market ``Argentina wins the 2026 FIFA World Cup'', with a YES contract priced at $\$0.31$, implies a $31\%$ chance that Argentina wins the tournament, and by extension, a $69\%$ probability that they do not.

Such markets allow for forecasting and hedging of future events, which is why they are viewed as aggregators of truth in the industry \cite{Desarapu2026}. For instance, a trader exposed to Bitcoin (BTC) price declines could use a prediction market tied to a future BTC threshold as a partial hedge: if BTC falls below the threshold, losses on the underlying exposure may be offset by gains from the prediction market position \cite{Bhaskara2026}. YES contract prices are influenced by order flow, with net buying pressure on YES contracts increasing their price, while net selling pressure (buying pressure on the opposing side) reduces it. If a market has sufficient volume, its YES price can be taken as a reflection of public belief of an event occurring, which has been shown in specific cases to be more accurate than even professional estimation \cite{PolymarketAccuracy}.

However, aggregation of public belief is inherently an imperfect predictor due to biases such as herding (crowd-following), overreaction to recent news, partisan or preference-driven trading, and limited attention across related contracts. This issue, combined with the potential for \textit{thin order books}\footnote{A thin order book has relatively few outstanding buy and sell orders, so a small number of trades can move prices substantially.}, \textit{stale prices}\footnote{A stale price is a price that remains unchanged despite new public information that should plausibly affect the probability of the event.}, and arbitrage constraints\footnote{Arbitrage refers to the process of exploiting inconsistent prices across related assets or markets to earn a low-risk profit; arbitrage constraints are frictions that prevent traders from fully correcting such inconsistencies.}, results in even strong market price baselines not being fully efficient.

Limitations of prediction markets include \textit{liquidity fragmentation}\footnote{Liquidity fragmentation is the division of available buyers, sellers, and capital across many similar or related contracts.} across contract thresholds, high fees, wide spreads, manipulation by \textit{whales}\footnote{A whale is a large trader whose orders are large enough to materially move market prices.} and arbitrageurs, market resolution ambiguity, and necessarily binary contracts.
\subsubsection{Prediction Market Belief States}
\begin{defn}
For a selected universe of $n$ prediction markets, the \textit{prediction market belief state} at timestamp $t$ is the vector
\[
p_t=(p_{1,t},p_{2,t},\ldots,p_{n,t})\in[0,1]^n,
\]
where $p_{i,t}$ is the YES-implied probability of market $i$ at time $t$.
\end{defn}
A prediction market belief state is a time-varying aggregation of market information, representing the probability distribution of event occurrence implied by incoming trades.

Related prediction markets are likely to exhibit statistical dependence. If a platform contains separate markets with similar underlying events, then contract prices are driven by the same information shocks. For example, Bitcoin and Ethereum (ETH) price threshold markets, meaning contracts on whether BTC or ETH will trade above or below a specified price by a given date, are both impacted by cryptocurrency regulation news. Thus, a shock can affect many coordinates of $p_t$ simultaneously, creating dependence among the probabilities in the selected market universe. This structure is complicated through consideration of adjacent markets, specifically crypto-adjacent in the case of this paper since we focus on a universe of markets centered around crypto prices. Macro and Fed rates markets, composed of contracts on macroeconomic outcomes such as inflation, interest rates, or Federal Reserve policy decisions, impact crypto markets through expectations surrounding liquidity and dollar strength, while Exchange-Traded Fund (ETF) and crypto policy markets similarly influence crypto markets with expectations about institutional adoption and regulation. Contracts on future actions or outcomes involving the company MicroStrategy are useful in this context as their outcomes are tied to corporate behavior, BTC exposure, and general market sentiment toward Bitcoin as a treasury asset.

A market universe should be viewed as a set of related contracts whose probability processes are statistically dependent due to exposure to shared latent drivers. Thus, our object of interest becomes the evolution of the universe belief state over timestamps $t$.

\subsection{Compression Overview}
At a fixed timestamp $t$, the belief state $p_t$ may contain many coordinates, one for each market in the selected universe. If related markets are driven by common latent factors, then the number of forces moving these probabilities may be much smaller than the number of observed markets. In that case, the observed belief state is high-dimensional in its raw representation, but lower-dimensional in the structure that actually drives its movement.

This motivates compression, the purpose of which is to determine whether a lower-dimensional representation preserves information relevant to forecasting final market resolution. To account for change in the set of markets over time, we utilize an active-set framework where the relevant belief state at timestamp $t$ is constructed exclusively from markets that have opened and have not yet resolved at time $t$. 

Two primary types of compression are explored in this paper: linear and nonlinear. Linear compression is represented by Principal Component Analysis, which captures dominant directions of co-movement, while nonlinear compression, represented by XGBoost \cite{ChenGuestrin2016} and Random Forest \cite{Breiman2001}, allows for the belief state to lie on a curved or more complex surface as opposed to a linear one. Topological summaries consider the shape of the local market neighborhood, searching for connectedness, loops, recurrence (repeated returns to similar belief-state configurations), and clustering structure. In forecasting, compression is only useful if it successfully preserves information relevant to event resolution, which is evaluated with proper scoring metrics including Brier Score \cite{Brier1950} and log loss, both of which compare predicted probabilities to realized binary outcomes. A compression method preserves forecasting information if a model utilizing the compressed state produces better calibrated probability forecasts of event resolution than market price, or if it predicts resolution at higher accuracy.

\subsection{Principal Component Analysis}
Principal Component Analysis (PCA) is a technique utilized for dimensionality reduction of data sets.

PCA is most simply put as a rotation and compression of related variables. Consider Figure \ref{fig:PCA_3D-2D}, in which each point starts off as a three-dimensional market belief state, along axes BTC-family mean implied probability, ETH-family mean implied probability, and Macro rates mean implied probability. These variables move together, which allows PCA to find the two-dimensional plane (defined by PC1 and PC2) preserving the largest amount of variance in the initial point cloud. PC1 and PC2 do not have direct real-world translations, but retain most of the original structure.

\begin{figure}[H]
    \centering
    \includegraphics[width=0.75\linewidth]{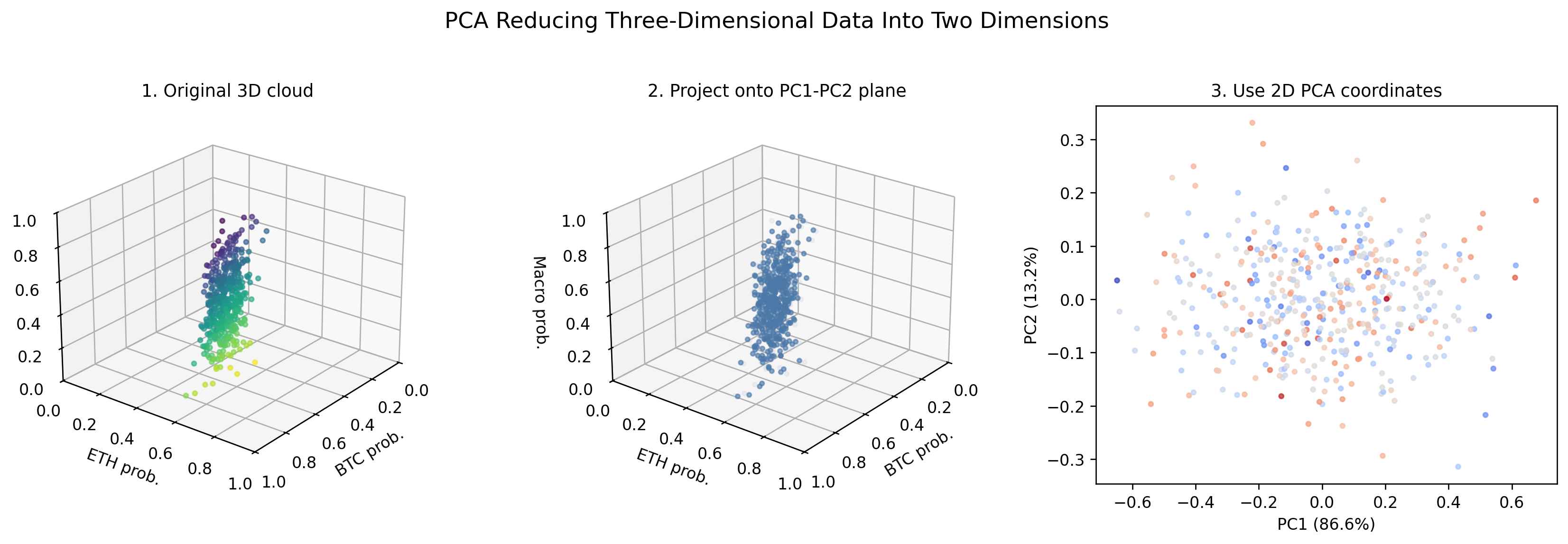}
    \caption{PCA reduction of a three-dimensional market belief state into two principal components}
    \label{fig:PCA_3D-2D}
\end{figure}

\subsubsection{Standardization}
Because PCA compares variables, if one is measured on a far larger scale than another, that variable might be taken as more important simply due to a larger numerical range. Hence, PCA is always preceded by standardization for each feature $x_j$:
\[
\tilde{x}_{t,j}=\frac{x_{t,j}-\overline{x}^{\,train}_j}{s^{train}_j}.
\]
Here, $\overline{x}^{\,train}_j$ is the training-period mean of feature $j$ and $s^{train}_j$ is the training-period standard deviation. Within the context of prediction market belief states, this is essential because active counts, probability means, medians, and dispersions exist on differing scales.

\subsubsection{Covariance Matrix}
In this section, let
\[
X\in \mathbb{R}^{T\times d}
\]
denote the centered and standardized training matrix, where rows correspond to timestamps and columns correspond to $d$ family-state features. PCA begins by forming the covariance matrix
\[
C=\frac{1}{T-1}X^{\mathsf T}X\in\mathbb{R}^{d\times d}.
\]
The $(j,k)$ entry $C_{jk}$ measures the co-movement between feature $j$ and feature $k$ across the training sample. In particular, $C_{jk}>0$ indicates that features $j$ and $k$ tend to rise and fall together, while $C_{jk}<0$ indicates that they tend to move inversely. This is the natural object to examine because PCA looks for directions of joint variation among the features, and $C$ is precisely the matrix encoding that pairwise joint variation \cite{Jaadi2025}.

Linearity is enough at this stage because PCA is intended to serve as the linear benchmark. If the major forecasting-relevant co-movements are approximately linear, they should be visible through the covariance structure of \(X\).

\subsubsection{Principal Components}
The matrix \(C\) is symmetric and positive semidefinite, so the spectral theorem guarantees that its eigenvectors can be chosen to be orthogonal and its eigenvalues are real and nonnegative. These orthogonal eigenvectors are the principal components, and are used as new, uncorrelated axes ordered by the amount of variance they capture \cite{TradingInterviewPCA}. The corresponding eigenvalues measure how much variance is captured along each principal direction.

The eigenvalues can be ordered as
\[
\lambda_1 \geq \lambda_2 \geq \cdots \geq \lambda_d \geq 0.
\]
For $v_k$ the $k$-th eigenvector under this ordering and $\lambda_k$ its corresponding eigenvalue, PCA solves
\[
C v_k = \lambda_k v_k.
\]
We define $PC_i$ to be the $i$-th principal component, so that $PC_i$ captures at least as much variance as $PC_{i+1}$:
\[
PC_1 = \argmax_{\|PC\|=1} \operatorname{Var}(XPC),
\]
\[
PC_k = \argmax_{\substack{\|PC\|=1 \\ PC \perp PC_1,\ldots,PC_{k-1}}} \operatorname{Var}(XPC).
\]
This allows PCA to rank the dimensions of the belief state by informational importance \cite{TradingInterviewPCA}. Each $PC_i$ gives a new axis revealing the greatest spread in data.

Once eigenvectors are ranked, PCA reduces dimensionality by discarding non-leading components. For any component $k$, its explained variance is
\[
EV_k = \frac{\lambda_k}{\sum_{j=1}^{d}\lambda_j},
\]
while the cumulative explained variance after keeping $K$ components is
\[
CEV_K = \frac{\sum_{k=1}^{K}\lambda_k}{\sum_{j=1}^{d}\lambda_j}.
\]
Thus, if the first few components represent the majority of variance, later components can be neglected without losing much of the original structure. The threshold for information one is willing to lose should be optimized for the market set being examined.

\begin{figure}[H]
    \centering
    \includegraphics[width=0.5\linewidth]{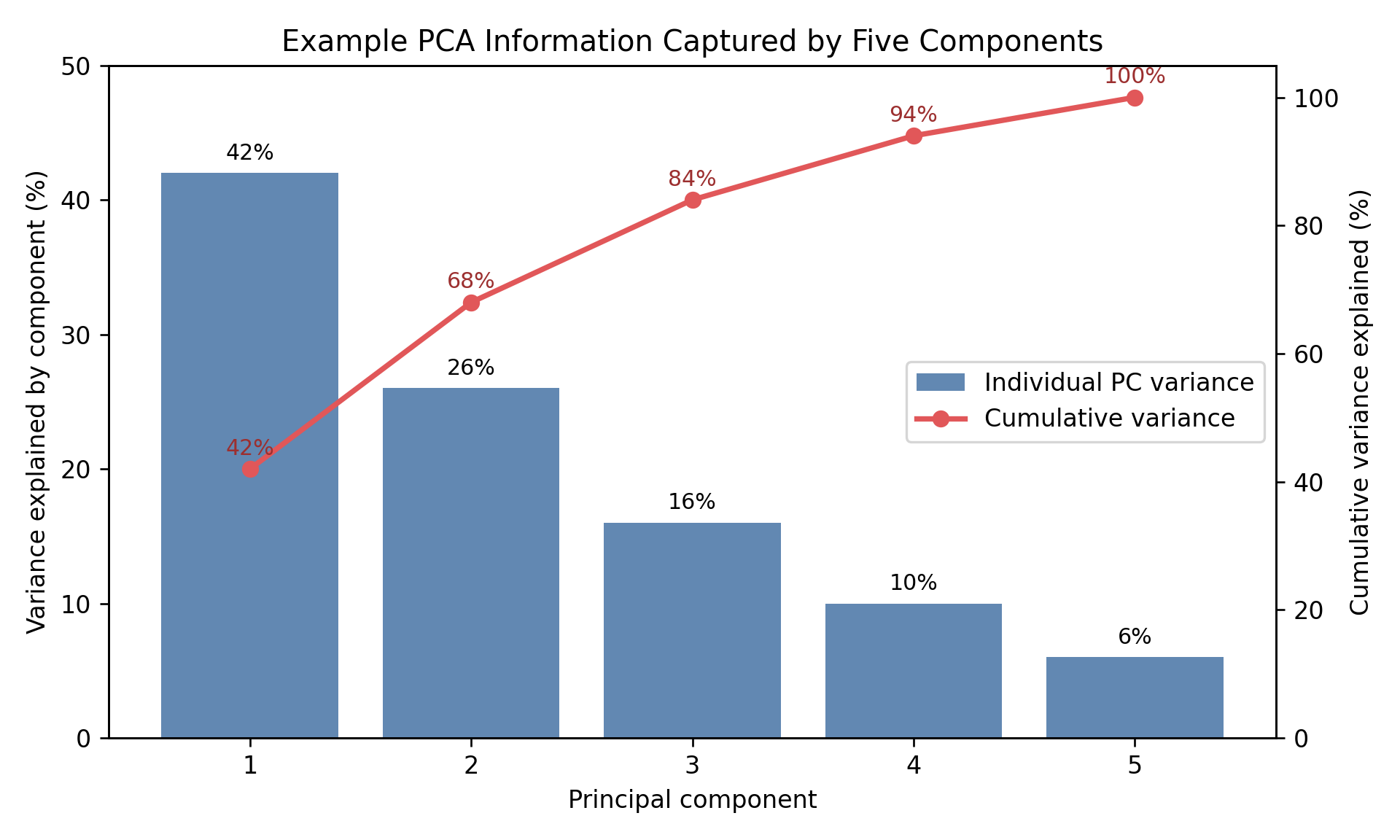}
    \caption{Variance captured from 5-dimensional data by each of 5 principal components. In this example, $84\%$ of the variance is captured within 3 principal components, leaving $16\%$ represented by $PC_4$ and $PC_5$.}
    \label{fig:background-1}
\end{figure}

\subsubsection{Projection}
The family-state belief vector used later in the paper is
\[
F_t = (c_{f,t}, \mu_{f,t}, m_{f,t}, s_{f,t})_{f \in \mathcal F},
\]
where \(c_{f,t}\) is active count, \(\mu_{f,t}\) is mean implied probability, \(m_{f,t}\) is median implied probability, and \(s_{f,t}\) is probability dispersion for family \(f\) at time \(t\). The precise construction is given in Section 3.2.

Since \(F_t\in\mathbb{R}^d\), its standardized version satisfies \(\overline{F}_t\in\mathbb{R}^d\). For \(K\leq d\), let
\[
V_K=(v_1,\ldots,v_K)\in\mathbb{R}^{d\times K},
\]
where each \(v_k\in\mathbb{R}^d\) is a principal direction. We define
\[
z_t^{PCA}=V_K^{\mathsf{T}}\overline{F}_t\in\mathbb{R}^K
\]
to be the compressed linear representation of the global family belief state at time \(t\). In our context, \(z_t^{PCA}\) functions as a variable we utilize to improve forecasts of market resolution. 

In practice, PCA is often computed through \textit{Singular Value Decomposition} (SVD) \cite{Beltrami1873,Jordan1874} rather than the explicit formation and diagonalization of the covariance matrix. The centered and standardized data matrix $X$ is decomposed directly as
\[
X=U S V^{\mathsf T},
\]
where $U$ is a unitary matrix, $V$ is an orthogonal matrix, and $S$ is a diagonal singular-value matrix. The diagonal entries $S_{jj}$ are the singular values of $X$, while off-diagonal entries are zero. The columns of $V$ give the principal directions and the singular values in $S$ determine the variance captured by each direction. Using SVD is computationally more stable, especially for larger or highly correlated datasets, but it is not a different mathematical object from the covariance-based PCA described above. Since
\[
X^{\mathsf{T}}X = V S^2 V^{\mathsf{T}},
\]
the right singular vectors of $X$ are exactly the eigenvectors of the covariance matrix up to the usual scaling factor. Our implementation uses scikit-learn's SVD-based PCA computation \cite{ScikitLearnPCA}.

\subsubsection{PCA-Enhanced Forecasting Model}
For $A_t\in\mathbb{Z}_{\geq 0}$ the active market count at timestamp $t$, and $\Lambda(x)=\frac{1}{1+e^{-x}}$, we formalize our PCA-enhanced forecasting model as
\[
\hat{p}_{i,t}
=
\Lambda(\beta_0+\beta_1p_{i,t}+\beta_2A_t+\gamma^{\mathsf{T}}z_t^{PCA}).
\]
Here, \(\beta_0\) is the intercept representing the model's baseline tendency toward YES resolution before accounting for the observed features; \(\beta_1\) measures the model's reliance on the market-implied probability. A large positive value of \(\beta_1\) means that higher market prices strongly increase the model's predicted probability of YES resolution. \(\beta_2\) measures how the total number of active markets at time \(t\) shifts the forecast after accounting for the individual market price and PCA state. The vector $\gamma$ measures the contribution of the broader compressed belief state; its entries describe how the learned PCA directions adjust the forecast after the individual market price is already included.

Through this model, we pose two key questions:
\begin{enumerate}
    \item Is there a lower-dimensional structure underlying the market belief state?
    \item If so, does this structure contain predictive content?
\end{enumerate}
If PCA improves the Brier Score and log loss of a base logistic regression model that takes $p_{i,t}$ as its primary input, then the answer to both of these questions will be yes, and we will have shown that common linear belief factors contain information not captured entirely by market pricing.

\subsection{XGBoost}
Before introducing persistent homology-enhanced nonlinear models, we briefly review XGBoost as a standard nonlinear benchmark. Prior financial forecasting work has used XGBoost for tasks such as volatility forecasting and stock market prediction \cite{WangWu2025,Jiang2023}, making it a natural comparison model here. We do not assume in advance that nonlinear models must help, but prediction market relationships may include thresholds, interactions, and regime-dependent effects that a linear model cannot represent. XGBoost tests whether the predictive information revealed by PCA and topological features remains useful when the baseline model is allowed to capture such nonlinear interactions. If a topology-enhanced model is only able to improve upon a linear model, it will imply that persistent homology aids in approximating a nonlinear underlying belief-space structure. We first introduce the decision-tree and boosting ideas underlying XGBoost.

\begin{defn}
A \textit{decision tree} is a function \(T:\mathcal X\to\mathbb R\) determined by a finite partition
\[
\mathcal X = \bigcup_{\ell=1}^L R_\ell,
\]
where \(R_\ell \cap R_m=\varnothing\) for \(\ell\neq m\). On each terminal region (subset of the feature space assigned to a leaf node) \(R_\ell\), the tree outputs a constant value \(w_\ell\), so
\[
T(x)=\sum_{\ell=1}^L w_\ell \mathbf{1}\{x\in R_\ell\}.
\]
\end{defn}
In words, a decision tree is a sequentially-structured model used for regression and classification tasks, recursively splitting a feature space into smaller regions.

An example can be visualized in Figure \ref{fig:Example_Decision_Tree} below.

\begin{figure}[H]
    \centering
    \includegraphics[width=0.75\linewidth,trim={0 4cm 0 0},
        clip]{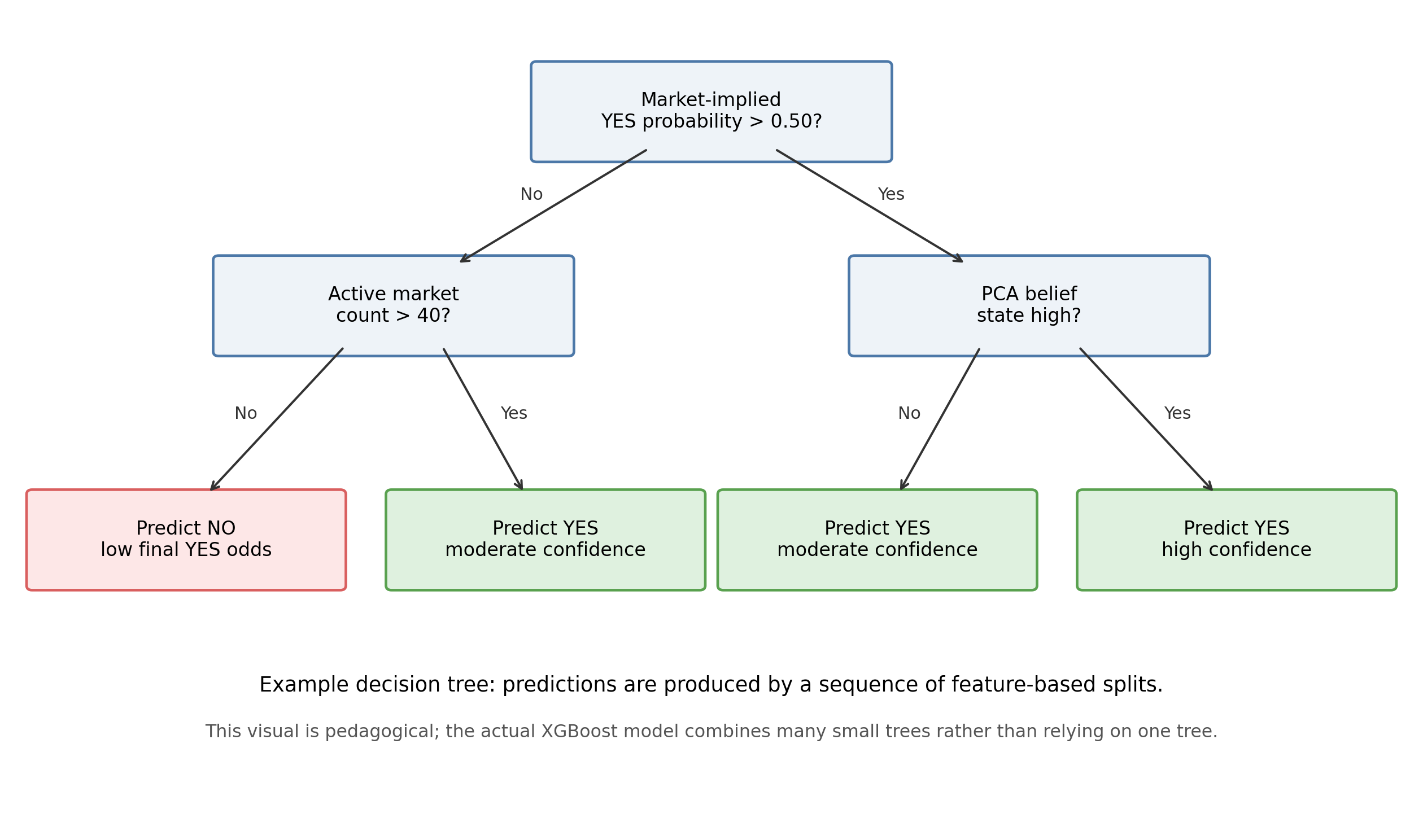}
    \caption{Example Decision Tree}
    \label{fig:Example_Decision_Tree}
\end{figure}

A decision tree in our context might have an initial split on $p_{i,t}$, another on the PCA state, another on active market count, and still more on local neighborhood features addressed in Section 3.5. This allows such trees to treat the same market-implied probability differently depending on the broader market environment.

Some limitations include instability and propensity for overfitting \cite{IBM2026}. This is addressed through a process called boosting, which combines many weak trees into a stronger learning model. Boosting trains weak trees sequentially, with later trees constructed to correct the mistakes of earlier ones \cite{IBM2026}. Trees are fit to reduce the error of the previous model.

\begin{defn}
\textit{Gradient Boosting} is a process combining boosting with utilization of a loss function, such that new trees are fit and added in the direction that minimizes loss. Formally,
\[
\hat{y}_i = \sum_{m=1}^{M} f_m(x_i),
\]
where $f_m$ is a decision tree and $x_i$ is a feature vector for observation $i$.
\end{defn}

Since the process of fitting new trees is additive, we obtain
\[
\hat{y}_i^{\,m} = \hat{y}_i^{\,m-1}+\eta f_m(x_i),
\]
where $\eta\in(0,1]$ is the learning rate, meaning the scalar step size controlling how much each newly added tree changes the current prediction. The ``gradient'' language comes from minimizing a differentiable training loss with respect to the model's current predictions; in our setting, this is justified because the model is trained using smooth probabilistic loss functions such as log loss, even though the final market resolutions are binary. Lower values for $\eta$ help prevent overfitting, as they limit the contribution of any individual tree. Thus, a gradient boosted model is both flexible and regularized, allowing for capture of nonlinear interactions while penalizing overly complex fits to training noise.

Extreme Gradient Boosting (XGBoost) is a machine learning library utilizing gradient-boosted decision trees \cite{IBM2026}. It is essentially a highly efficient implementation of gradient boosting that supports parallel computation, sparsity-aware splitting, and parameter tuning, meaning the selection of modeling parameters such as learning rate, tree depth, number of trees, and regularization strength. In financial data with nonlinear, high-dimensional relationships and instability, XGBoost is a powerful model.

The objective function\footnote{An objective function is the quantity a learning algorithm seeks to minimize during training. It balances how well the model fits the training data against how complex the model is.} optimized by XGBoost can be written generally as
\[
\mathcal{L}
=
\sum_i \ell(y_i,\hat{y}_i)
+
\sum_{m=1}^{M}\Omega(f_m),
\]
where the first term measures prediction error and the second term penalizes model complexity. A common tree-complexity regularization penalty is
\[
\Omega(f)=\gamma T+\frac{1}{2}\lambda\sum_{j=1}^{T}w_j^2,
\]
where $T$ is the number of leaves in the tree, $w_j$ is the score assigned to leaf $j$, and $\lambda$ applies $L_2$-style shrinkage to leaf weights \cite{WangWu2025}. XGBoost is frequently utilized in financial forecasting, with examples ranging from estimation of S\&P 500 volatility index \cite{WangWu2025} to improvement of stock market prediction modeling \cite{Jiang2023}. These examples provide precedent for our analogous setup and utilization of XGBoost as a nonlinear benchmark.

\subsection{Random Forest}
We make use of Random Forest as a second nonlinear benchmark. Random Forest is a nonlinear model built on decision trees, but where XGBoost builds trees sequentially with later trees correcting errors in earlier trees, Random Forest builds many trees independently, combining predictions by calculating their mean \cite{IBMRandomForest2026}. 

Mathematically, for trees $T_1,\ldots,T_B$ and feature vector $x_{i,t}$, the Random Forest probability estimate is
\[
\hat{p}_{i,t}^{RF}=\frac{1}{B}\sum_{b=1}^{B}T_b(x_{i,t}),
\]
where $B$ is the total number of trees and each $T_b(x_{i,t})\in[0,1]$ is a scalar probability estimate produced by tree $b$, not a full probability distribution. The core of Random Forest is bootstrap aggregation, also known as bagging \cite{IBMRandomForest2026}. In this process, each tree is trained on a randomly sampled version of the training data, such that individual trees see related, but non-identical datasets. Randomness is also introduced at the feature level, allowing each node split to consider only a random subset of features \cite{IBMRandomForest2026}, which prevents trees from becoming too similar to one another. This protects the strength of the ensemble of trees, which is strongest when individual components make imperfectly correlated errors.

This allows Random Forest to detect nonlinear threshold effects, local interactions, and conditional relationships that might be invisible to a logistic regression model. For example, a Random Forest can learn that a PCA coordinate matters only when the market-implied probability is already high, or that a topology feature matters only within a particular range of active market counts. There are also limitations to Random Forest. To begin with, although individual decision trees are relatively interpretable, a large forest is harder to interpret as the final prediction is the average of many individual tree predictions. Additionally, fitting many trees requires storing many split rules and evaluating each tree at prediction time, making Random Forest more computationally and memory intensive than simpler linear models. In this paper, Random Forest is used as a nonlinear robustness benchmark. 
\subsection{Topology Fundamentals}
Before diving into the specific methodology undertaken in this paper, it is important to obtain an understanding of certain basic topological notions.

\begin{defn}\label{def:topological-invariant}
A \textit{topological invariant} is a quantity or object associated to a space that remains unchanged under continuous deformation. More formally, if two spaces are homeomorphic, then a topological invariant assigns the same value or object to both spaces.
\end{defn}

The topological invariants of particular interest to us are derived from homology (Section~\ref{subsec:homology}). We are especially concerned with the number of connected components and loops that persist across a filtration.

\subsubsection{Simplicial complexes}
To make sense of the ``shape'' of a market neighborhood, we first need a discrete combinatorial model of that shape. Simplicial complexes provide this model by allowing vertices, edges, triangles, and higher-dimensional analogues to be treated within a single object.

Suppose $V$ is a finite, nonempty set containing vertices $v$.

\begin{defn}
A \textit{simplicial complex on $V$} is a collection $S$ of nonempty subsets of $V$ such that:
\[
v \in V \implies \{v\}\in S,
\]
and
\[
\tau \in S,\ \sigma \subseteq \tau,\ \sigma\neq \emptyset \implies \sigma \in S.
\]
The elements $\sigma,\tau$ of $S$ are called \textit{simplices}.
\end{defn}

This definition elicits a notion of dimension of $S$:
\[
\dim S = \max\{\dim \sigma:\sigma \in S\},
\]
where $\dim \sigma$ is defined to be one less than the number of vertices in $\sigma$. Hence,
\[
S_i=\{\sigma\in S:\dim \sigma=i\}.
\]
Examples of simplicial complexes include undirected graphs and various geometric shapes.

\begin{figure} [H]
    \centering
    \includegraphics[width=0.5\linewidth]{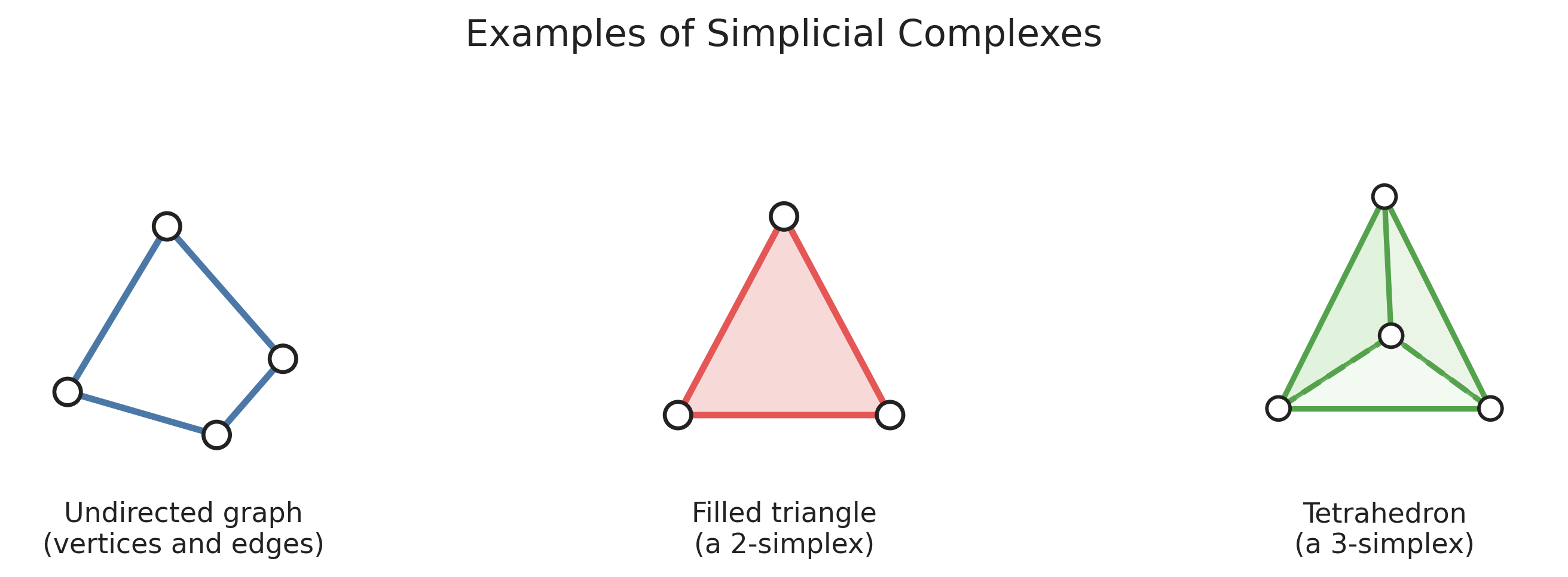}
    \caption{Examples of simplicial complexes}
    \label{fig:background-2}
\end{figure}

\begin{defn}
Let $S$ be a simplicial complex, and take $\sigma,\tau\in S$. The simplex $\sigma$ is a \textit{face of $\tau$}, denoted $\sigma\leq\tau$, if $\sigma\subseteq\tau$. 
\end{defn}

Every vertex in $\sigma$ must also be a vertex in $\tau$ in order for $\sigma$ to be a face of $\tau$. If $\sigma\subsetneq\tau$, then $\sigma$ is a proper face of $\tau$.

\subsubsection{Filtrations}
Persistent homology measures the evolution of topological features as a scale parameter increases. To formalize this evolution, we need the notion of a nested sequence of subcomplexes.

\begin{defn}
Take $S$ to be a simplicial complex. A collection $T\subseteq S$ is a \textit{subcomplex} of $S$ if $T$ is itself a simplicial complex and, for every $\tau\in T$, every face $\sigma\leq\tau$ also satisfies $\sigma\in T$.
\end{defn}

An example of a subcomplex can be seen below, with
\[
T=\{\{a\},\{b\},\{c\},\{ab\},\{bc\},\{ac\}\}
\]
forming a boundary subcomplex of the filled triangle
\[
S=\{\{a\},\{b\},\{c\},\{ab\},\{bc\},\{ac\},\{abc\}\}.
\]

\begin{figure}[H]
    \centering
    \includegraphics[width=0.2\linewidth]{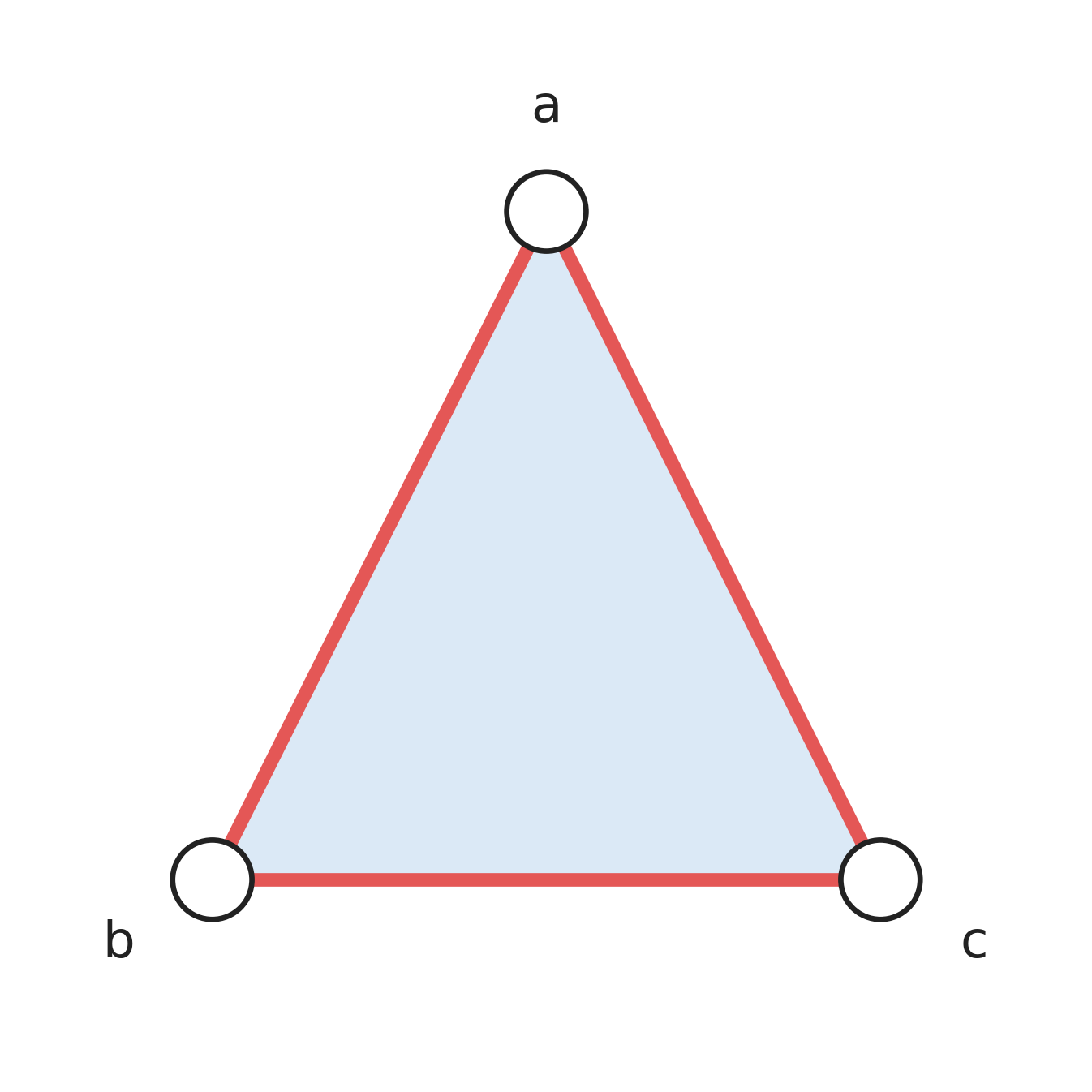}
    \caption{Subcomplex Visualization}
    \label{fig:background-3}
\end{figure}

Now we are in a position to introduce a structure crucial to persistent homology, the filtration.

\begin{defn}
A \textit{filtration} of $S$ indexed by $\mathbb{Z}$ is a nested sequence of subcomplexes
\[
F_1S \subset F_2S \subset \cdots \subset F_nS=S.
\]
A filtration indexed by $\mathbb{R}$ is a family $\{F_rS\}_{r\in\mathbb{R}}$ such that $F_rS\subseteq F_sS$ whenever $r\leq s$.
\end{defn}

\subsubsection{Vietoris-Rips Complexes}
The specific filtration used throughout this paper is the Vietoris-Rips filtration \cite{Vietoris1927}, which is built from pairwise distances between points. A simplex is included whenever all of its vertices are pairwise within distance $\varepsilon$. By contrast, the more geometrically faithful \v{C}ech complex \cite{Cech1932} includes a simplex only when the $\varepsilon$-balls around all of its vertices have a common intersection. Such a complex is harder to compute in high-dimensional settings, so the Vietoris-Rips construction is a more practical choice. 

Consider a point cloud $P\subset \mathbb{R}^n$. There exists a well-defined notion of \textit{thickening} $P$ by any scale $\varepsilon \in \mathbb{R}_+$, which elicits the notation $P^{+\varepsilon}$ defined to be the union of $\varepsilon$-balls about each point $p\in P$. Formally,
\[
P^{+\varepsilon}=\bigcup_{p\in P}B_\varepsilon(p).
\]

\begin{figure} [H]
    \centering
    \includegraphics[width=0.75\linewidth,trim={0 0 0 1.5cm},
        clip]{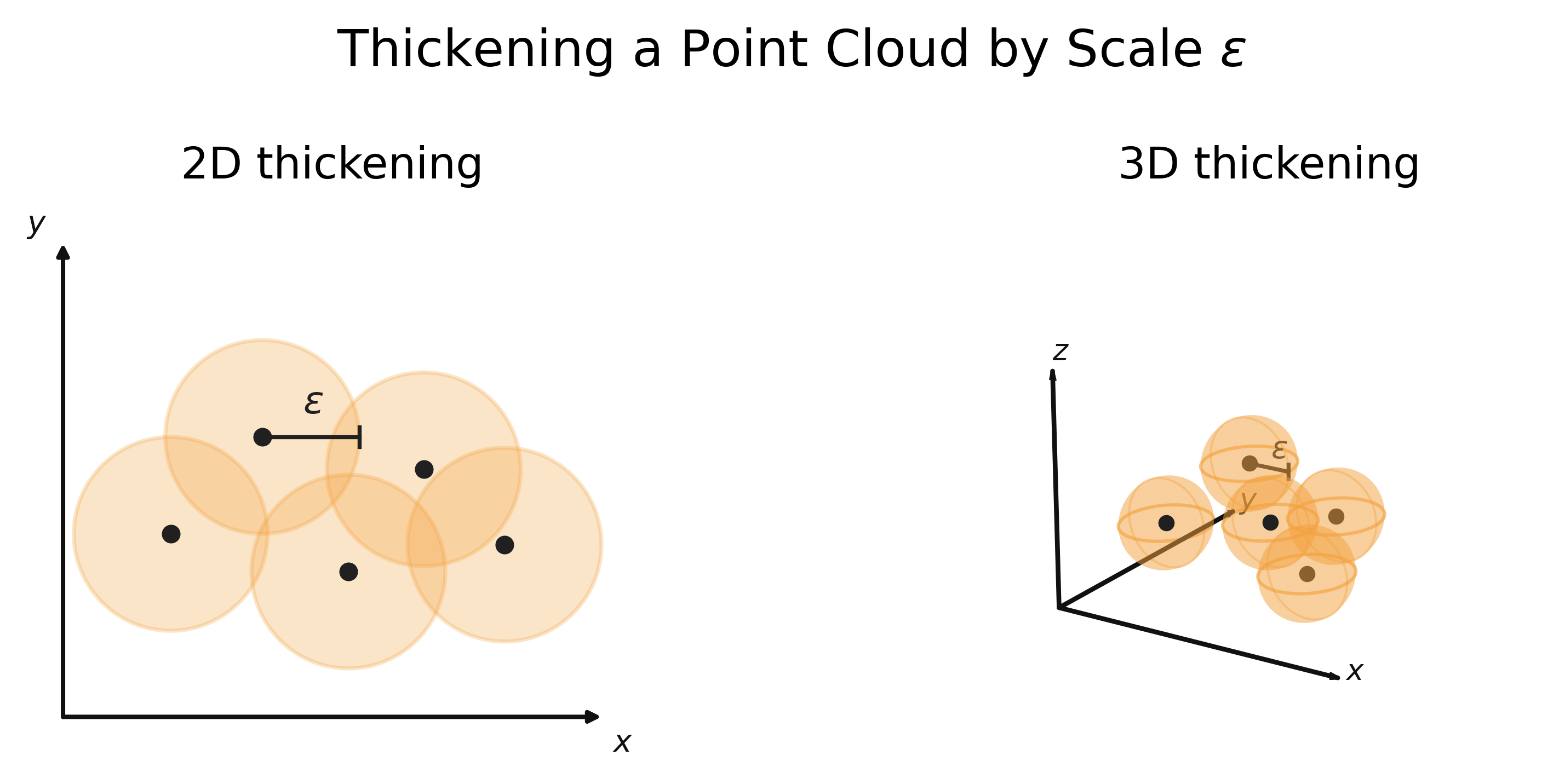}
    \caption{Thickening of point cloud $P$ by $\varepsilon$.}
    \label{fig:background-4}
\end{figure}

\begin{defn}
Let $(P,d)$ be a finite metric space with metric $d$. The \textit{Vietoris-Rips filtration} of $P$ is the collection of simplicial complexes $\{VR_\varepsilon(P)\}_{\varepsilon\in\mathbb{R}_+}$ such that, for any finite subset $\{p_0,\ldots,p_n\}\subset P$,
\[
\{p_0,\ldots,p_n\}\text{ forms an }n\text{-simplex in }VR_\varepsilon(P)
\iff
d(p_i,p_j)\leq \varepsilon\quad \forall\, i,j.
\]
\end{defn}

It follows that $VR_\varepsilon(P)$ is a simplicial complex for all $\varepsilon>0$, with elements $p\in P$ forming vertices of these complexes. Another important result is that for all pairs $\delta\leq \varepsilon$, we have
\[
VR_\delta(P)\subseteq VR_\varepsilon(P).
\]
The formation of such Vietoris-Rips complexes can be thought of as a thickening of the neighborhood around a given point of interest. As $\varepsilon$ increases, additional edges and higher-dimensional simplices are added, forming a nested sequence.

\begin{figure} [H]
    \centering
    \includegraphics[width=0.75\linewidth]{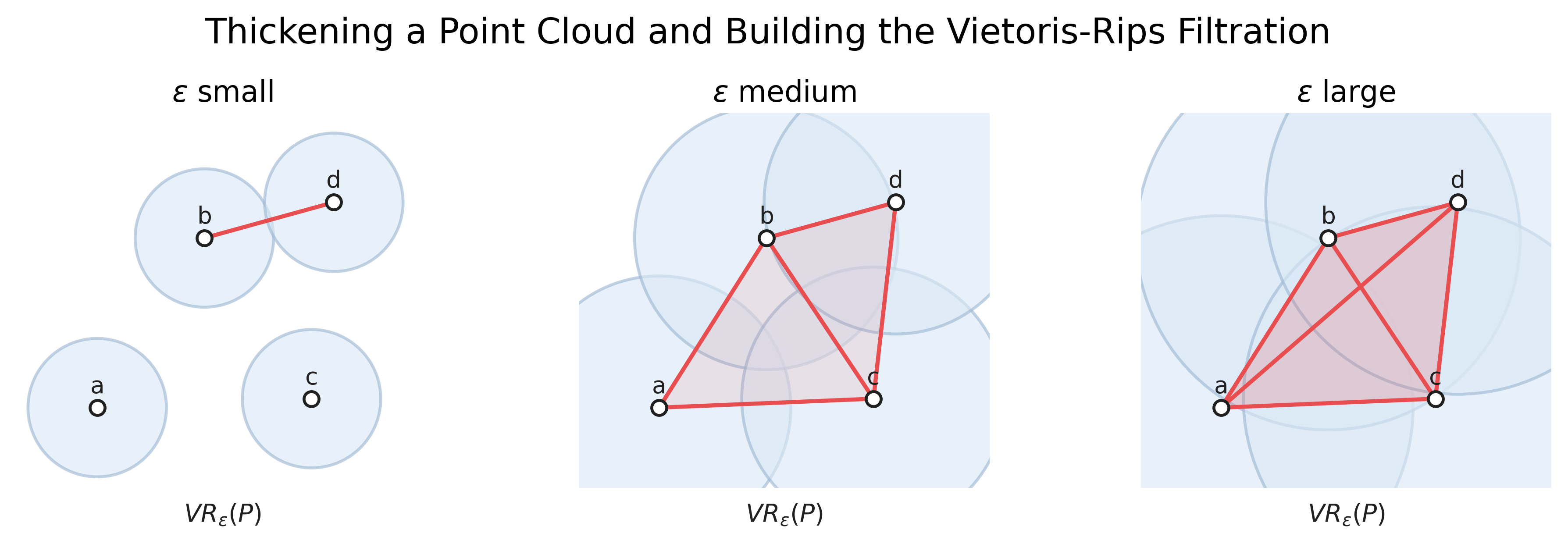}
    \caption{Vietoris-Rips Filtration}
    \label{fig:background-5}
\end{figure}

\subsubsection{Homology}\label{subsec:homology}
In topology, we study shapes and structures within \textit{homotopy equivalence classes}, taking shapes with the same holes and intersections as equivalent. For instance, a hollow triangle is homotopy equivalent to a hollow square because both can be continuously deformed into a circle as each has one connected component and one one-dimensional hole.

Homology is the algebraic tool by which we count connected components (0-dimensional features), loops (1-dimensional features), voids (2-dimensional features), and higher-dimensional analogues. To count these structures algebraically we introduce orientations, boundaries, chain groups, and homology groups.

\begin{defn}
An \textit{orientation} of a simplicial complex $P$ is a choice of ordering for the vertices of each simplex in $P$.
\end{defn}

\begin{defn}
The \textit{$i$-th face} of an oriented $p$-simplex
\[
\sigma=(v_0,v_1,\ldots,v_p)
\]
is the $(p-1)$-simplex
\[
\sigma_{-i}=(v_0,\ldots,v_{i-1},v_{i+1},\ldots,v_p)
\]
obtained by removing the $i$-th vertex.
\end{defn}

Consider the following triangle with vertices $0,1,2$. We have that $\sigma_{-0}=\{1,2\}$, $\sigma_{-1}=\{0,2\}$, and $\sigma_{-2}=\{0,1\}$.

\begin{figure} [H]
    \centering
    \includegraphics[width=0.25\linewidth]{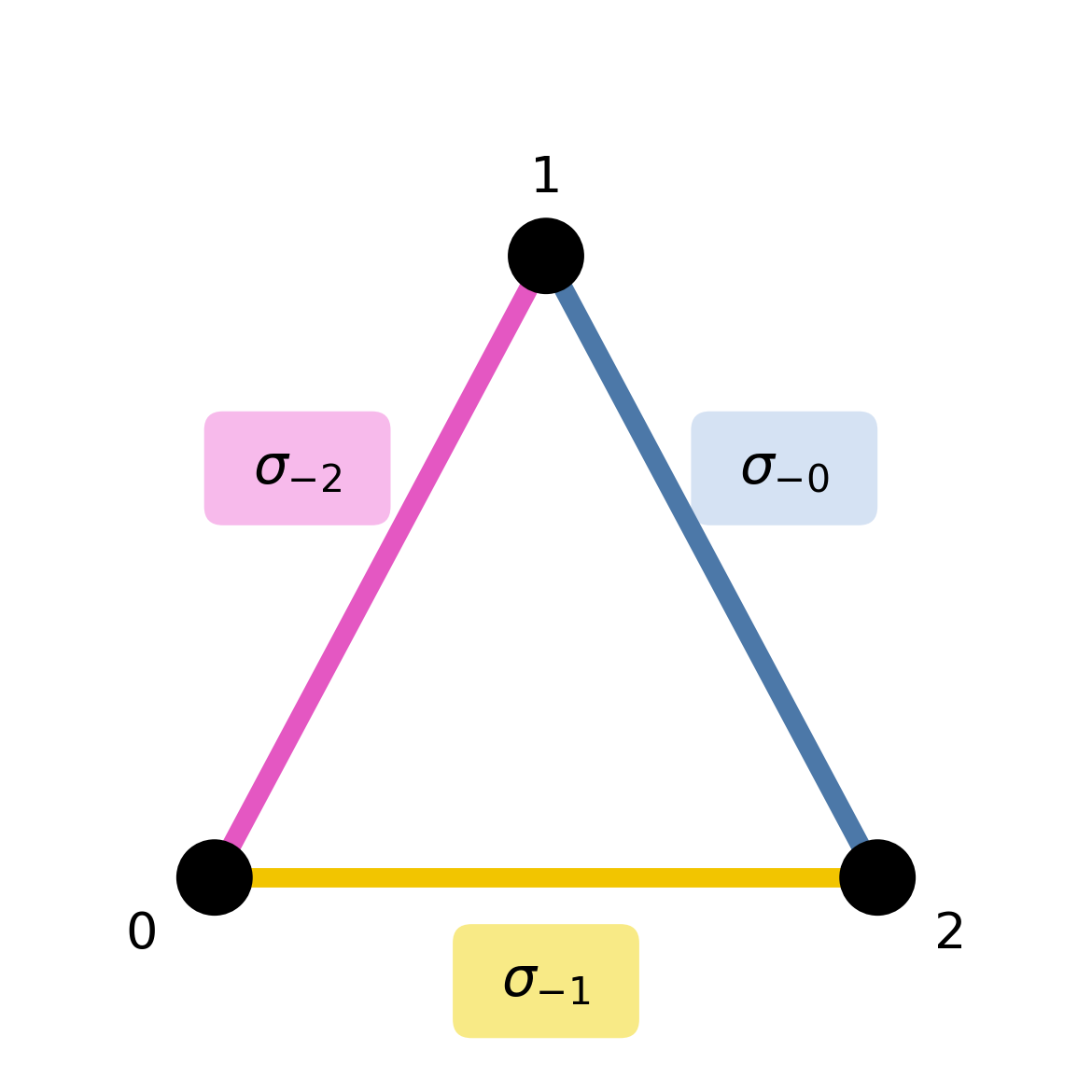}
    \caption{Faces of a triangle simplex}
    \label{fig:background-6}
\end{figure}

\begin{defn}
The \textit{$p$-th chain group} of $P$, denoted $C_p(P)$, is the vector space generated by the $p$-simplices of $P$ as basis elements.
\end{defn}

Thus, each $\gamma\in C_p(P)$ can be uniquely expressed as
\[
\gamma=\sum_{\sigma}\gamma_\sigma\cdot\sigma,
\]
where $\gamma_\sigma\in\mathbb{F}$ and $\sigma$ ranges over the $p$-simplices of $P$.

\begin{defn}
Let $\sigma$ be a $p$-dimensional oriented simplex. The \textit{algebraic boundary} of $\sigma$ is
\[
\delta_p\sigma=\sum_{i=0}^{p}(-1)^i\sigma_{-i}.
\]
\end{defn}

The boundary operator
\[
\delta_p:C_p(P)\rightarrow C_{p-1}(P)
\]
maps a simplex to its boundary.

Conveniently, we have that $\delta_{p-1}\circ\delta_p=0$ for $p>0$, which is the algebraic fact making homology possible.

\begin{prop}
For every \(p>0\),
\[
\delta_{p-1}\circ \delta_p=0.
\]
\end{prop}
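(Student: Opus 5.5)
Since $\delta_{p-1}\circ\delta_p$ is linear and the oriented $p$-simplices form a basis of $C_p(P)$, it suffices to check that $\delta_{p-1}(\delta_p\sigma)=0$ for a single oriented $p$-simplex $\sigma=(v_0,\ldots,v_p)$. For $p=1$ the composite lands in $C_{-1}(P)$, which we take to be $0$ (equivalently $\delta_0=0$), so the claim is immediate. We therefore assume $p\geq 2$.

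Expanding the definition twice gives
\[
\delta_{p-1}\delta_p\sigma=\sum_{i=0}^{p}(-1)^i\,\delta_{p-1}\sigma_{-i}=\sum_{i=0}^{p}\sum_{k=0}^{p-1}(-1)^{i+k}\,(\sigma_{-i})_{-k}.
\]
Each term $(\sigma_{-i})_{-k}$ is $\sigma$ with two vertices deleted. The only subtlety is the index bookkeeping after the first deletion. If $k<i$, the $k$-th vertex of $\sigma_{-i}$ is $v_k$. If $k\geq i$, it is $v_{k+1}$. So for $j<l$, the $(p-2)$-face $\sigma_{-\{j,l\}}$ obtained by deleting $v_j$ and $v_l$ (keeping the induced order) arises in exactly two ways.

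\begin{itemize}
\item Delete $v_l$ first and then $v_j$, i.e.\ $i=l$, $k=j$, with sign $(-1)^{l+j}$.
\item Delete $v_j$ first and then $v_l$, which now sits in position $l-1$, i.e.\ $i=j$, $k=l-1$, with sign $(-1)^{j+l-1}$.
\end{itemize}

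The two signs are opposite, so the double sum equals
\[
\sum_{j<l}\bigl((-1)^{j+l}+(-1)^{j+l-1}\bigr)\,\sigma_{-\{j,l\}}=0.
\]
To make this rigorous I will split the double sum into the ranges $k<i$ and $k\geq i$. In the second range I reindex by $k'=k+1$. This identifies the two ranges bijectively with the pairs $j<l$ and shows their contributions are negatives of each other.

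The main point to get right is this reindexing. One must verify that $(\sigma_{-i})_{-k}$ and $(\sigma_{-k-1})_{-i}$ for $k\geq i$ are the same ordered simplex, not merely the same vertex set, so that no extra orientation sign enters. This holds because both keep the remaining vertices in their original increasing order. The argument works over any field $\mathbb{F}$, and the case of characteristic $2$ is immediate since signs are irrelevant there.
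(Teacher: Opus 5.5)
Your proof is correct and follows essentially the same route as the paper: you reduce by linearity to a single oriented $p$-simplex, observe that each $(p-2)$-face obtained by deleting $v_j,v_l$ with $j<l$ arises exactly twice with signs $(-1)^{j+l}$ and $(-1)^{j+l-1}$, and cancel in pairs. The split into the ranges $k<i$ and $k\geq i$ and the separate treatment of $p=1$ (via $C_{-1}=0$) only make explicit the index bookkeeping that the paper leaves implicit.
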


\begin{proof}
It suffices to check this on an oriented \(p\)-simplex
\(\sigma=[v_0,\ldots,v_p]\). By definition,
\[
\delta_p\sigma=\sum_{i=0}^{p}(-1)^i
[v_0,\ldots,\widehat{v_i},\ldots,v_p].
\]
Applying \(\delta_{p-1}\), every \((p-2)\)-face obtained by deleting two vertices \(v_i,v_j\) with \(i<j\) appears twice: once by deleting \(v_i\) first and then \(v_j\), and once by deleting \(v_j\) first and then \(v_i\). These two appearances have coefficients
\[
(-1)^i(-1)^{j-1}
\qquad\text{and}\qquad
(-1)^j(-1)^i,
\]
which sum to zero. Hence all terms cancel in pairs, so
\[
\delta_{p-1}\delta_p\sigma=0.
\]
By linearity, \(\delta_{p-1}\circ\delta_p=0\) on \(C_p(P)\).
\end{proof}

Elements of \(\ker \delta_p\) are called \(p\)-cycles, while elements of \(\operatorname{im}\delta_{p+1}\) are called \(p\)-boundaries.

\begin{defn}
The \textit{$p$-th homology group of $P$} is
\[
H_p(P)=\ker \delta_p / \operatorname{im}\delta_{p+1}.
\]
\end{defn}

Hence, homology groups are defined by cycles modulo boundaries. We write
\[
\beta_p(P)=\dim H_p(P),
\]
where $\beta_p$ is called the $p$-th \textit{Betti number} of $P$. A Betti number entirely determines $H_p(P)$ up to isomorphism as a vector space.

\subsubsection{Persistent Homology}
Persistent homology is a technique within topological data analysis revolving around the measurement of the lifespans of various topological structures, namely connected components, loops, voids, and higher-dimensional analogues.

To consider the lifespan of such structures, we must define ideas of \textit{birth} and \textit{death}. The birth of a feature is the first stage of a filtration in which it appears, while the death of a feature is the stage where that feature disappears or merges into another feature. To formalize this, we introduce persistence modules and persistent homology groups.

\begin{defn}
A \textit{persistence module} indexed by \(\mathbb{Z}\) is a collection of vector spaces
\[
\{V_i\}_{i\in\mathbb{Z}}
\]
together with linear maps
\[
a_{i,j}:V_i\to V_j
\qquad \text{for every } i\leq j,
\]
such that
\[
a_{i,i}=\operatorname{id}_{V_i}
\]
and, whenever \(i\leq j\leq k\),
\[
a_{i,k}=a_{j,k}\circ a_{i,j}.
\]
Equivalently, if the module is specified by maps \(a_i:V_i\to V_{i+1}\), then for \(i<j\) we define
\[
a_{i,j}=a_{j-1}\circ a_{j-2}\circ\cdots\circ a_i.
\]
\end{defn}

\begin{defn}
The $j$-th \textit{persistent homology group} of $(V,a)$ is the subspace of $V_j$ given by
\[
PH_{i\rightarrow j}=\operatorname{im}(a_{i\rightarrow j}).
\]
If $F(P)$ is our filtration, analysis of such groups involves computation of homology groups $H_p(F_i(P))$ in order to obtain $p$-dimensional features at stage $i$.
\end{defn}
Now we can say that some $v \in V_i$ is born at $i$ if $v \notin$ im$(a_{i-1})$, while the same $v$ dies at $j$ if $j$ is the smallest number such that $a_{i \rightarrow j}(v) = 0.$ Thus, $PH_{i\rightarrow j}$ can be thought of as the set of p-dimensional features born at or before stage $i$ that survive into stage $j$. 

This is where Betti numbers come into play. Since $\beta_p =$dim$H_p$, the Betti numbers count how many independent $p$-dimensional topological features are present at a given stage of filtration.  $\beta_0$ counts connected components, $\beta_1$ counts independent loops, and $\beta_2$ counts enclosed voids. In this paper, we examine $H_0$ and $H_1$ because the relevant structures in prediction market belief geometry are the formation of connected market clusters and the presence of loop-like relationships among correlated markets. Persistent homology tracks the appearance of these components and loops in addition to how long they survive with the change of filtration scale, allowing us to identify stable and intermittent geometric structure. 
\subsubsection{Persistence Diagrams/Barcodes}
The information produced by persistent homology is most commonly summarized through \textit{persistence diagrams} and \textit{barcodes} \cite{ZomorodianCarlsson2005,CrawleyBoevey2015}.

In the single-parameter case, persistence diagrams and barcodes provide a clean way to summarize topological features through birth-death intervals. The basic object behind this representation is an interval module: a persistence module supported on one interval of the filtration parameter, with one copy of the field on that interval and zero outside it. This interval-based description makes single-parameter persistence especially easy to visualize.

\begin{thm}[Interval decomposition {\cite{ZomorodianCarlsson2005,CrawleyBoevey2015}}]\label{thm:interval-decomposition}
Let $V$ be a persistence module indexed by $\mathbb{Z}$ or $\mathbb{R}$ over a field $\mathbb{F}$, with $\dim_{\mathbb{F}}V_t<\infty$ at every index $t$. Then
\[
V\cong\bigoplus_{\alpha\in A}\mathbb{F}_{I_\alpha},
\]
where each $I_\alpha$ is a nonempty interval. Here, $\mathbb{F}_I$ has value $\mathbb{F}$ at indices in $I$ and zero outside $I$, with identity structure maps between indices in $I$ and zero maps otherwise. The multiset of intervals is unique up to reordering and is the barcode of $V$.
\end{thm}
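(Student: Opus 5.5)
Our plan follows Crawley-Boevey's strategy for pointwise finite-dimensional modules. The $\mathbb{Z}$-indexed case is a special case of the $\mathbb{R}$-indexed one, via restriction to a totally ordered index set. So we work over a totally ordered set $T$ and assume throughout that each $V_t$ is finite dimensional.

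\textbf{Step 1: cuts and sections.} For each $t\in T$ we define subspaces of $V_t$ from images and kernels of the structure maps $a_{s,t}$. Here $\operatorname{im}^-_t$ runs over $s<t$, $\operatorname{im}^+_t$ over $s\le t$, and kernels are taken toward later indices $u>t$ or $u\ge t$. Each of these subspaces is an intersection or union of a chain of subspaces of $V_t$. Because $\dim V_t<\infty$, every such chain stabilizes. This lets us index them by cuts $(L,U)$ of $T$, that is, partitions of $T$ into a lower set and an upper set. For a pair of cuts $c<d$ we form the subquotient
\[
V_t^{c,d} \;=\; \frac{\operatorname{im}^{c^+}_t\cap\ker^{d^+}_t + \operatorname{im}^{c^-}_t\cap\ker^{d^-}_t}{\operatorname{im}^{c^-}_t\cap\ker^{d^+}_t + \operatorname{im}^{c^+}_t\cap\ker^{d^-}_t}.
\]
Its dimension counts the intervals with endpoints $(c,d)$ that pass through $t$.

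\textbf{Step 2: a filtration lemma for sections.} Fix $t$. The image cuts and the kernel cuts each give a totally ordered filtration of $V_t$. The key linear-algebra fact is a Zassenhaus-type lemma for a pair of filtrations of a finite-dimensional space. It says that we can choose complements $W^{c,d}_t$ of the denominators inside the numerators, and that these complements give $V_t=\bigoplus_{c,d}W^{c,d}_t$. Only finitely many of the $W^{c,d}_t$ are nonzero.

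\textbf{Step 3: compatibility across $t$, the main obstacle.} The complements must be chosen so that $a_{s,t}$ maps $W^{c,d}_s$ isomorphically onto $W^{c,d}_t$ whenever $s\le t$ lie in the interval $I(c,d)$. I would handle this with a covering or Zorn argument, in the style of Ringel and Crawley-Boevey. First show that the functor $t\mapsto V_t^{c,d}$ is constant on $I(c,d)$ with isomorphic transition maps; this follows from the definitions of image and kernel. Then lift a basis of a single $V_{t_0}^{c,d}$ to compatible elements of the submodule generated by the numerator pieces. This produces a submodule $\mathbb{F}_{I(c,d)}^{\,m}$, where $m=\dim V_{t_0}^{c,d}$. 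Proving that these submodules together form a direct sum equal to all of $V$ reduces pointwise to Step 2. Making the lifts coherent over an uncountable $T$ is the delicate part. I expect to need a disjointness argument, namely that the family of submodules of the form $\mathbb{F}_I$ satisfies the hypotheses of Zorn's lemma.

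\textbf{Step 4: uniqueness.} The multiplicity of an interval $I$ with cut endpoints $(c,d)$ in any decomposition equals $\dim V_t^{c,d}$ for any $t\in I$. This holds because the functors $\operatorname{im}$, $\ker$ and the subquotients commute with direct sums, and for a single $\mathbb{F}_J$ they give $\mathbb{F}$ exactly when $J=I$. The subquotient dimensions are intrinsic to $V$, so the barcode is unique up to reordering. Alternatively, one can invoke Krull–Remak–Schmidt–Azumaya, since $\operatorname{End}(\mathbb{F}_I)=\mathbb{F}$ is local.
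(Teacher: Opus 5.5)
The paper gives no proof of this theorem. It quotes the result from Zomorodian--Carlsson (finite-type $\mathbb{Z}$-modules, via graded modules over $\mathbb{F}[x]$) and from Crawley-Boevey, and your outline follows Crawley-Boevey's functorial-filtration argument. Steps 1, 2 and 4 are sound in outline. One point in Step 2 needs sharpening: for a disjoint family of sections covering $V_t$, you need that \emph{every} choice of complements $W^{c,d}_t$ gives $V_t=\bigoplus_{c,d}W^{c,d}_t$. In Step 3 the complements are dictated by compatibility, so a ``there exist complements'' version would not suffice.

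The genuine gap is Step 3, which you flag but leave open, and the remedy you propose targets the wrong object.

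\textbf{What is needed.} Lifting a basis of $V^{c,d}_{t_0}$ at one index and pushing forward covers only $t\ge t_0$. Below $t_0$ you must choose preimages, infinitely often if $I=I(c,d)$ has no minimum, and all choices must cohere. Precisely, you need complements $W_s$ of the denominator in the numerator at every $s\in I$, with $a_{s,t}(W_s)=W_t$ for $s\le t$ in $I$. The difficulty is the coinitiality of $I$ at its lower end, not the cardinality of $T$. Zorn's lemma on independent families of submodules $\mathbb{F}_I$ only produces a maximal direct sum, and showing that it exhausts $V$ is as hard as the theorem.

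\textbf{Surjectivity.} First prove that $a_{s,t}$ maps the numerator and denominator at $s$ \emph{onto} those at $t$. This follows from $a_{s,t}(\operatorname{im}^{c^\pm}_s)=\operatorname{im}^{c^\pm}_t$, $a_{s,t}^{-1}(\ker^{d^\pm}_t)=\ker^{d^\pm}_s$ and $a(A\cap a^{-1}B)=a(A)\cap B$. Since the induced map on subquotients is an isomorphism, the span of any lifts of a basis of $W_t$ is then a complement at $s$.

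\textbf{Coherent choices.} If $I$ has a minimum, one lift there suffices. Otherwise, for $T=\mathbb{Z}$ or $\mathbb{R}$, choose a coinitial sequence $t_0>t_1>\cdots$ in $I$. Lift $W_{t_n}$ to obtain $W_{t_{n+1}}$, and set $W_s=a_{t_n,s}(W_{t_n})$ for $s\ge t_n$. This is where a countability property of $\mathbb{R}$ enters Crawley-Boevey's proof.

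For an arbitrary totally ordered $T$, apply Zorn's lemma instead to compatible families defined on upper subsets $J\subseteq I$. Take $s\in I\setminus J$. For each $j\in J$, the complements $W$ at $s$ with $a_{s,j}(W)=W_j$ form a nonempty affine subspace of the space of splittings of the projection onto $V^{c,d}_s$. These subspaces shrink as $j$ decreases and stabilize by finite-dimensionality, so a single $W_s$ works for all $j\in J$.

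\textbf{Assembly.} Finally, $W_s\subseteq\ker^{d^+}_s$, so the structure maps vanish beyond the upper end of $I$. Hence $\{W_s\}$ is a submodule isomorphic to $\mathbb{F}_I^{\,m}$. Step 2, applied pointwise, then shows that $V$ is the direct sum of these submodules over all $I$.
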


If a topological feature is born at filtration scale $i$ and dies at scale $j$, then it can naturally be represented as a point $(i, j)$ in $\mathbb{R}^2$. The collection of all such points for a fixed dimension $p$ is called the $p$-dimensional persistence diagram. Points close to the diagonal $i = j$ represent features that die shortly after birth, and are therefore taken as noise or short-lived structure. Points far from the diagonal have long lifespans, implying persistence across numerous filtration scales, and are more likely to reflect stable geometric structure within the data.
\begin{figure}[H]
    \centering
    \includegraphics[width=0.4\linewidth,trim={0 0.6cm 0 0},
        clip]{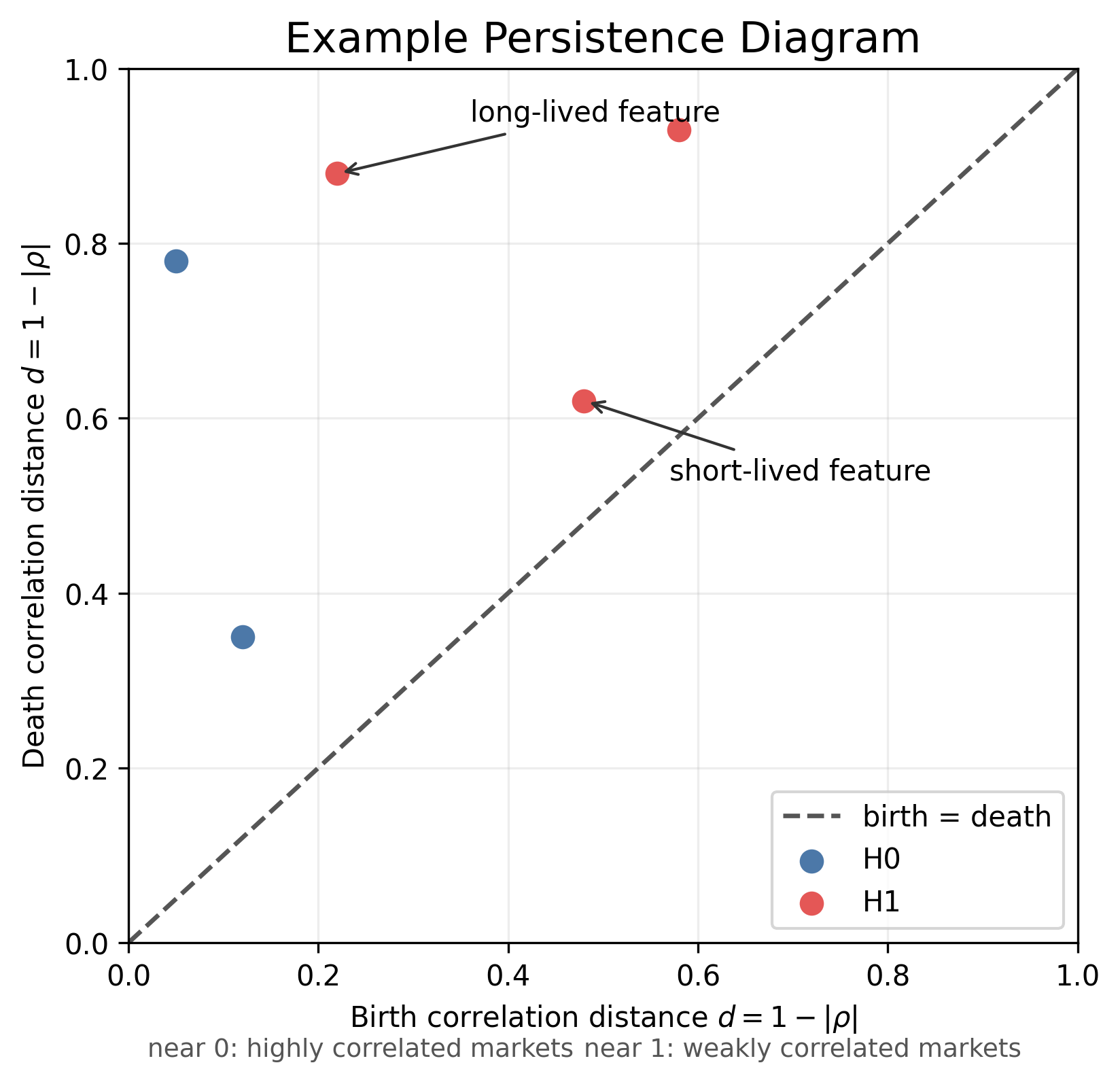}
    \caption{Example Persistence Diagram}
    \label{fig:background-7}
\end{figure}

More often, however, this information is represented as a \textit{barcode}. Points $(i,j)$ in the analogous persistence diagram are instead represented as intervals $[i, j)$, beginning at the scale where the feature appears and ending where it disappears. Longer bars correspond to more persistent features, while shorter bars correspond to features that only briefly appear.

\begin{figure}[!htbp]
    \centering
    \includegraphics[width=0.5\linewidth]{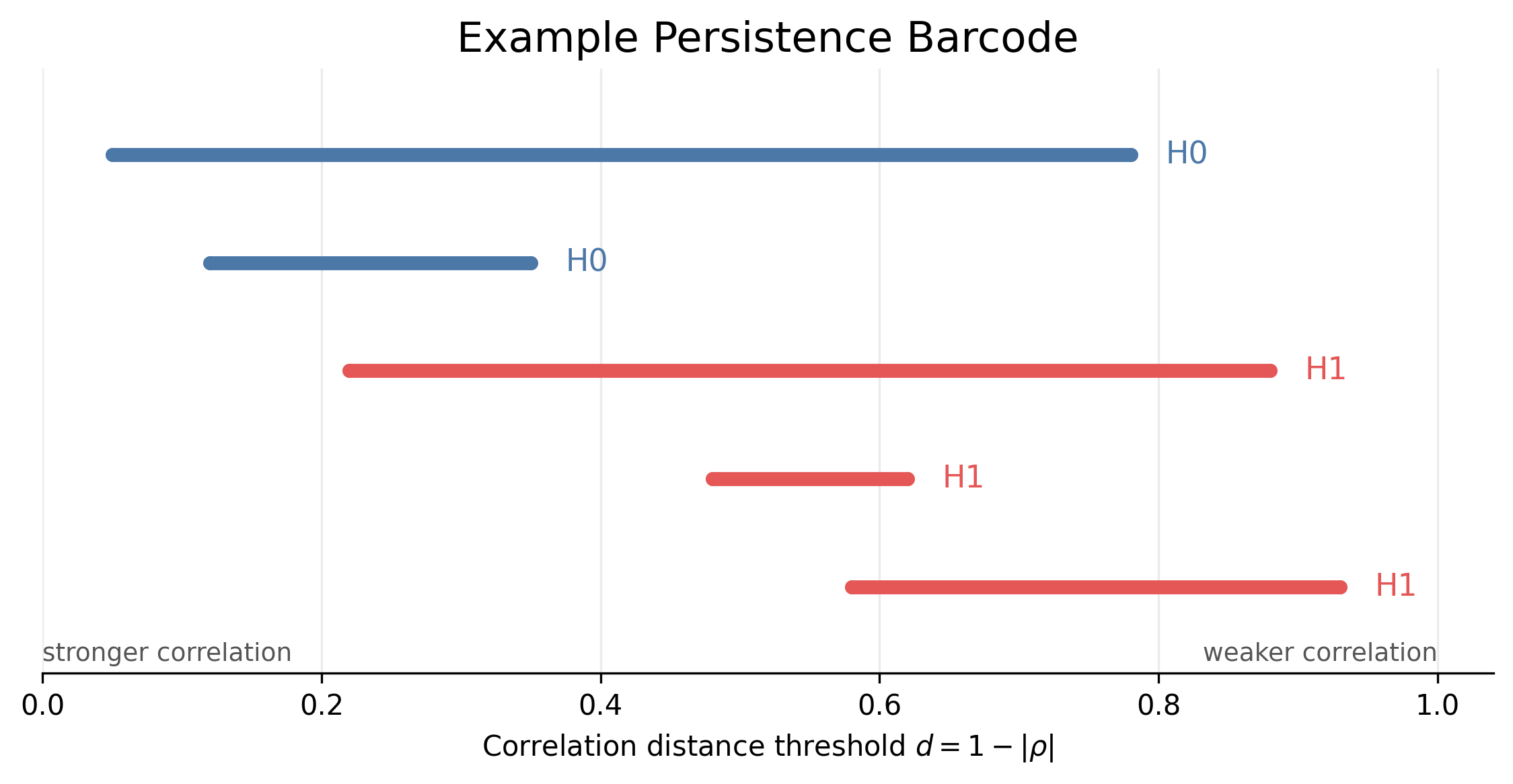}
    \caption{Example persistence barcode}
    \label{fig:persistence_barcode}
\end{figure}
\subsubsection{Bottleneck Distance}
We require a method by which to determine whether two persistence modules are ``close'' to one another. For pointwise finite-dimensional single parameter persistence modules, this is commonly done by first passing to their persistence diagrams or barcodes and then comparing those summaries using bottleneck distance (Definition~\ref{def:bottleneck-distance}).

\begin{defn} Let $\varepsilon \geq 0$. An $\varepsilon$\textit{-matching} between two barcodes $D_A$ and $D_B$ is a bijection
\[
\rho: A_0 \rightarrow B_0
\]
between multi-subsets $A_0 \subseteq D_A$ and $B_0 \subseteq D_B$ such that
\[
\begin{aligned}
&\text{1) every unmatched interval } [a,b] \in (D_A-A_0)\cup(D_B-B_0)\\
&\qquad \text{has length } b-a \leq 2\varepsilon,\\
&\text{2) if } \rho([a_1,a_2])=[b_1,b_2],\\
&\qquad \text{then } |a_1-b_1|\leq \varepsilon
\text{ and } |a_2-b_2|\leq \varepsilon.
\end{aligned}
\] \end{defn}

\begin{defn}\label{def:bottleneck-distance} The \textit{bottleneck distance} between barcodes $D_A$ and $D_B$ is defined by
\[
d_{\mathrm{Bot}}(D_A,D_B)
=
\inf\{\varepsilon \geq 0 :
\text{ there exists an } \varepsilon\text{-matching between } D_A \text{ and } D_B\}.
\] \end{defn}
The bottleneck distance describes the minimum amount you would need to move the worst-matched feature so that two persistence diagrams line up, allowing short-lived unmatched features to be matched to the diagonal. 
\begin{figure}
    \centering
    \includegraphics[width=0.5\linewidth]{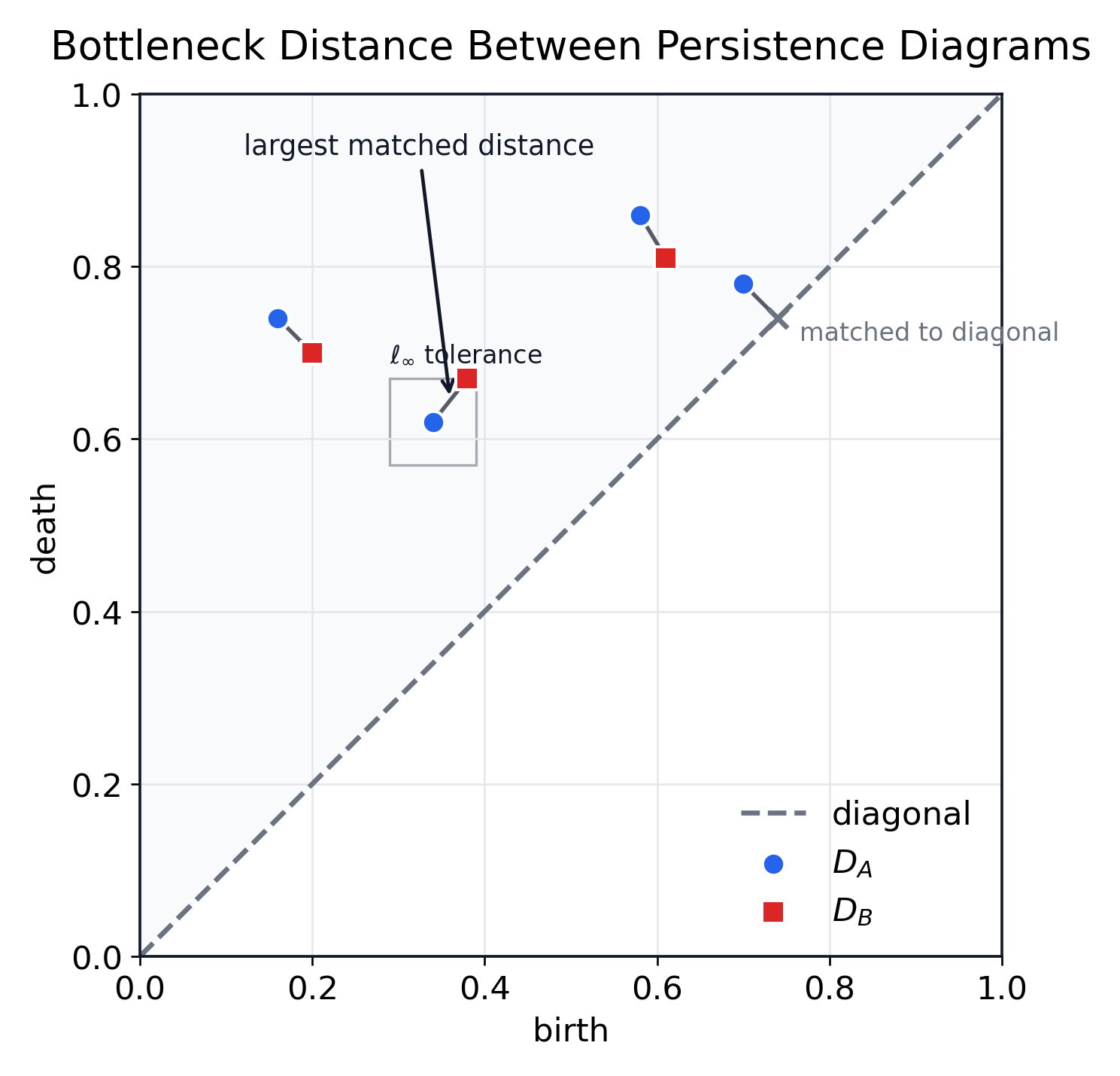}
    \caption{Bottleneck Distance between two Persistence Diagrams}
    \label{fig:background-8}
\end{figure}

\subsubsection{Stability}
\begin{thm}[Stability of persistence diagrams {\cite{CohenSteinerEdelsbrunnerHarer2007}}]\label{thm:persistence-stability}
Let $K$ be a finite simplicial complex and let $f,g:|K|\to\mathbb{R}$ be continuous piecewise-linear functions on its geometric realization. Fix a homological degree $p\geq 0$ and a coefficient field $\mathbb{F}$. Let $D_p(f)$ and $D_p(g)$ be the persistence diagrams of the sublevel-set filtrations
\[
\{x\in |K|:f(x)\leq a\},\qquad
\{x\in |K|:g(x)\leq a\},\qquad a\in\mathbb{R}.
\]
Then
\[
d_{\mathrm{Bot}}\bigl(D_p(f),D_p(g)\bigr)
\leq \|f-g\|_\infty
=\sup_{x\in |K|}|f(x)-g(x)|.
\]
Intervals with infinite death time are matched to one another, with their matching cost determined by the difference in birth times.
\end{thm}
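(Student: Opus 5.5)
The approach is the algebraic route: interleavings of persistence modules, followed by an isometry-type argument that converts interleavings into matchings of barcodes. Set $\varepsilon=\|f-g\|_\infty$ and write $F_a=f^{-1}(-\infty,a]$ and $G_a=g^{-1}(-\infty,a]$. Since $|f(x)-g(x)|\le\varepsilon$ for every $x$, we get the inclusions
\[
F_a\subseteq G_{a+\varepsilon}\subseteq F_{a+2\varepsilon},\qquad G_a\subseteq F_{a+\varepsilon}\subseteq G_{a+2\varepsilon}.
\]
Apply $H_p(-;\mathbb F)$ to obtain persistence modules $V_a=H_p(F_a)$ and $W_a=H_p(G_a)$, together with maps $\phi_a:V_a\to W_{a+\varepsilon}$ and $\psi_a:W_a\to V_{a+\varepsilon}$. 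Functoriality of homology on inclusions shows that $\psi_{a+\varepsilon}\circ\phi_a$ and $\phi_{a+\varepsilon}\circ\psi_a$ are the internal structure maps $a\mapsto a+2\varepsilon$. They also commute with the structure maps. So $V$ and $W$ are $\varepsilon$-interleaved.

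Next we check that Theorem~\ref{thm:interval-decomposition} applies. Because $K$ is finite and $f,g$ are piecewise linear, each sublevel set deformation retracts onto a subcomplex of a suitable subdivision of $K$. Hence each $V_a$ is finite dimensional. Moreover, $V$ changes only at the finitely many vertex values of $f$, and between consecutive critical values the structure maps are isomorphisms. The same holds for $W$. Thus both modules are pointwise finite dimensional, with finite barcodes $D_p(f)$ and $D_p(g)$.

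The heart of the proof is the step that an $\varepsilon$-interleaving of barcoded modules yields an $\varepsilon$-matching of their barcodes; this is the algebraic stability theorem. I would prove it with the induced-matching construction of Bauer and Lesnick. First, for any morphism $\chi:V\to W$, match intervals of $V$ and $W$ with a common right endpoint when $\chi$ is injective, and with a common left endpoint when $\chi$ is surjective, ordering intervals by length. Second, factor $\phi$ as $V\twoheadrightarrow\operatorname{im}\phi\hookrightarrow W(\varepsilon)$. Third, use the interleaving to show two things about $\operatorname{im}\phi$. Its cokernel in $W(\varepsilon)$ and the kernel of $V\to\operatorname{im}\phi$ are both $2\varepsilon$-trivial, meaning every element dies within $2\varepsilon$. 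From this, each interval $[b,d)$ of $V$ of length greater than $2\varepsilon$ is matched to an interval of $\operatorname{im}\phi$ of the form $[b,d')$ with $d-2\varepsilon\le d'\le d$. Likewise, intervals of $\operatorname{im}\phi$ correspond to intervals of $W(\varepsilon)$ that share the right endpoint and whose left endpoint shifts by at most $2\varepsilon$. Undoing the $\varepsilon$-shift recenters these errors so that every endpoint moves by at most $\varepsilon$. The unmatched intervals have length at most $2\varepsilon$, as the definition of an $\varepsilon$-matching requires.

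I expect the main obstacle to be this bookkeeping: showing the induced matching is well defined with multiplicities, and that the endpoint shifts are exactly $\le\varepsilon$ after recentering. An alternative is the original box lemma argument of Cohen-Steiner, Edelsbrunner and Harer, which counts points of $D_p$ in quadrants using ranks of $V_a\to V_b$ and compares them with $W$ via the interleaving. Infinite bars need separate treatment. In the interleaving they map injectively to infinite bars in the limit, so they are matched among themselves, and their births differ by at most $\varepsilon$, as stated. Taking the infimum over admissible $\varepsilon$, or simply using $\varepsilon=\|f-g\|_\infty$, gives $d_{\mathrm{Bot}}(D_p(f),D_p(g))\le\|f-g\|_\infty$.
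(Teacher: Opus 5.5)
Your proposal is correct, but the paper does not prove this theorem itself; it cites \cite{CohenSteinerEdelsbrunnerHarer2007}, whose argument is combinatorial rather than algebraic, so you are taking a genuinely different route.

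In the cited proof, the number of points of $D_p(f)$ in a box is expressed through ranks of the maps $H_p(F_a)\to H_p(F_b)$. It is then compared with the number of points of $D_p(g)$ in the $\varepsilon$-enlarged box; this is the box lemma. That gives an explicit matching when $\|f-g\|_\infty$ is small relative to the separation between distinct points of $D_p(f)$. The general case follows by moving along $h_\lambda=(1-\lambda)f+\lambda g$, covering $[0,1]$ by finitely many small steps, and summing with the triangle inequality for $d_{\mathrm{Bot}}$. The piecewise-linear hypothesis is what keeps every $h_\lambda$ tame, so the interpolation is legitimate.

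You instead use algebraic stability, in the form of Bauer and Lesnick's induced-matching proof. The only geometric input is the sandwich $F_a\subseteq G_{a+\varepsilon}\subseteq F_{a+2\varepsilon}$; everything else is a statement about pointwise finite-dimensional modules. This gives you several things:
\begin{itemize}
\item a single $\varepsilon$-matching at $\varepsilon=\|f-g\|_\infty$, with no interpolation and no tameness of intermediate functions;
\item the more general inequality $d_{\mathrm{Bot}}\le d_{\mathrm{Int}}$ for the interleaving distance defined in the paper;
\item direct coverage of the Vietoris--Rips filtrations of perturbed correlation distances that the paper actually uses, where the PL sublevel-set formulation does not literally apply.
\end{itemize}
The cost is that the step you call bookkeeping is a substantial imported theorem: that the induced matching is well defined with multiplicities, and that $2\varepsilon$-trivial kernel and cokernel give the stated endpoint bounds. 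The box-lemma route needs only ranks of maps.

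One simplification: infinite bars need no separate treatment. A bar $[b,\infty)$ has length exceeding $2\varepsilon$, so it is matched. The epimorphism step keeps the left endpoint, and its partner $[b,d')$ satisfies $d'\ge d-2\varepsilon$ with $d=\infty$, forcing $d'=\infty$. The monomorphism step keeps the right endpoint. Hence infinite bars are matched to infinite bars with births within $\varepsilon$ automatically, and the remark about mapping injectively ``in the limit'' can be dropped.
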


Why are the lifespans of $H_0$ and $H_1$ structures of concern? One reason is that analysis with consideration of lifespan is stable under small data perturbations \cite{CohenSteinerEdelsbrunnerHarer2007}. If a point cloud is slightly shifted or if market probabilities move by a small amount, then the resulting persistence diagram should change only by a similarly small amount. Prediction market data is noisy, so small fluctuations should not produce entirely different topological conclusions. 
\subsubsection{Multipersistence}
Despite the various advantages of persistent homology, in practice it is often the case that single-parameter persistent homology is not sufficient as an analytical tool. A single scale parameter applies the same distance threshold across the entire dataset, which can blur distinctions between dense regions (where small-scale structure is meaningful) and sparse regions (where the same scale may connect points too aggressively) \cite{Blumberg2014}.
Figure~\ref{fig:single_parameter_density_issue} provides a visual representation of this issue.

\begin{figure}[H]
    \centering
    \includegraphics[width=1\linewidth, trim={0 0 0 3cm},
        clip]{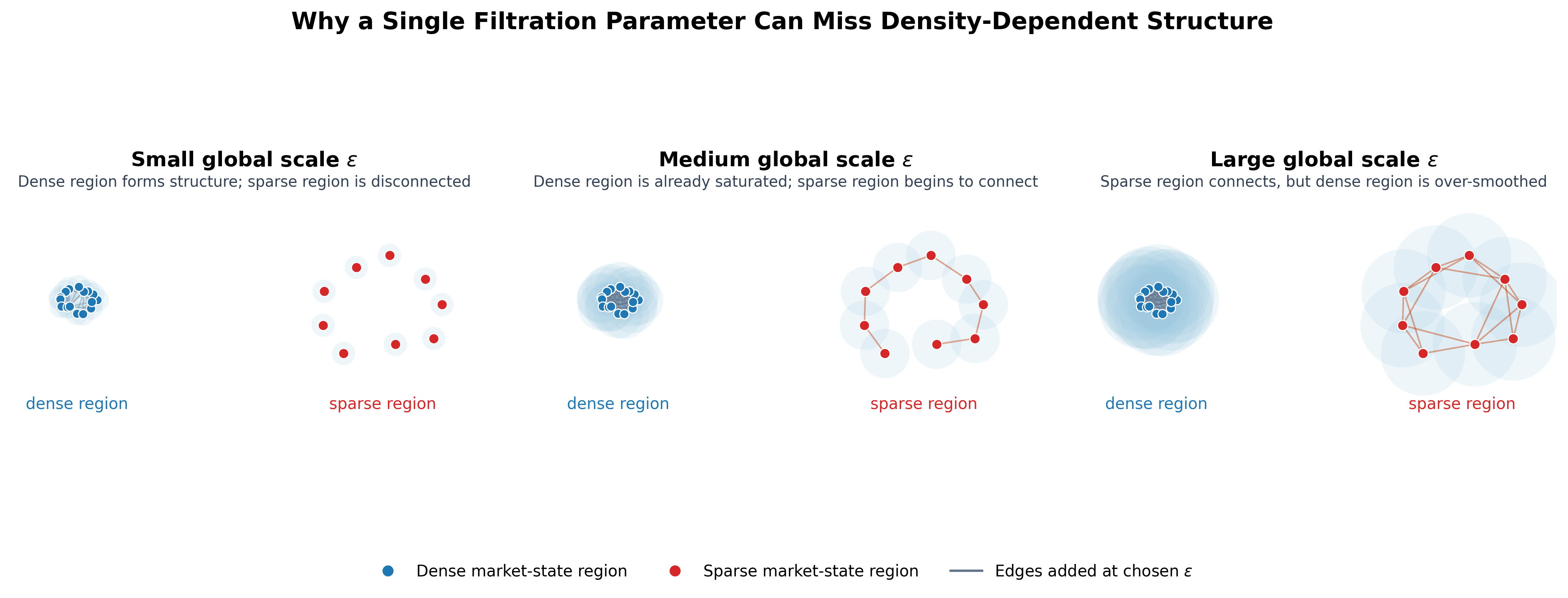}
    \caption{Single-parameter PH in dense and sparse regions}
    \label{fig:single_parameter_density_issue}
\end{figure}  

A natural extension is to include an additional filtration parameter. By allowing scale and density to vary as two separate filtration parameters, one can track which topological features persist not only across distance thresholds but also across different levels of local data density. This produces a more robust analytical tool because features must remain stable under two complementary notions of resolution. To formalize this procedure, we require a few more definitions. 
\begin{defn}
A partially ordered set (\textit{poset}) \(P\) is a set equipped with a relation \(\leq\) satisfying reflexivity, antisymmetry, and transitivity. In the multiparameter setting, we use the product partial order on \(\mathbb{R}^2\), given by
\[
(a,b)\leq(c,d) \iff a\leq c \text{ and } b\leq d.
\]
Then, an \textit{interval} \(A\) in \(P\) satisfies the following: \newline
\hspace*{1cm} 1) \(a,c\in A\) \(\&\) \(a\leq b\leq c \implies b\in A.\) \newline
\hspace*{1cm} 2) For any \(a,c\in A\), there exists a finite sequence \(a=b_0,b_1,\ldots,b_n=c\) with \newline
\hspace*{1.5cm} each consecutive \(b_i,b_{i+1}\), \(0\leq i<n\), being comparable under the partial \newline
\hspace*{1.5cm} ordering.
\end{defn}

Multiparameter persistence modules are indexed by partially ordered parameter spaces, where increasing one or both filtration parameters produces inclusion maps between simplicial complexes.
\begin{defn} If $P = S_1 \times S_2 \times ... \times S_n $ with $S_i$ ordered for all $i$ and $F$ a filtration on $P$, then $F$ is a \textit{multiparameter filtration}. \end{defn}The following commutative diagram showcases the extension of the notion of a filtration into two dimensions, where $F_{i,j}$ denotes the simplicial complex obtained after applying filtration thresholds $i,j$ along two distinct parameters. 
\[
\xymatrix{
\vdots & \vdots & \vdots \\
F_{0,2} \ar@{^{(}->}[r] \ar@{^{(}->}[u] 
& F_{1,2} \ar@{^{(}->}[r] \ar@{^{(}->}[u] 
& F_{2,2} \ar@{^{(}->}[r] \ar@{^{(}->}[u] 
& \cdots \\
F_{0,1} \ar@{^{(}->}[r] \ar@{^{(}->}[u] 
& F_{1,1} \ar@{^{(}->}[r] \ar@{^{(}->}[u] 
& F_{2,1} \ar@{^{(}->}[r] \ar@{^{(}->}[u] 
& \cdots \\
F_{0,0} \ar@{^{(}->}[r] \ar@{^{(}->}[u] 
& F_{1,0} \ar@{^{(}->}[r] \ar@{^{(}->}[u] 
& F_{2,0} \ar@{^{(}->}[r] \ar@{^{(}->}[u] 
& \cdots
}
\]

In our case, the most practical of such filtrations is the \textit{bifiltration}, utilizing two increasing parameters instead of one. Single-parameter filtrations such as $VR_\varepsilon$ rely on the expansion of a single $\varepsilon$, while bifiltrations consider $\varepsilon_1, \varepsilon_2$. Thus, simplices are only included in states of the bifiltration if two conditions are met. Multiparameter persistence modules follow the same natural extension. 

Unfortunately, however, complications arise with attempts to extend barcodes to multiparameter persistence. In the multiparameter case, there is generally no comparably simple barcode representation, which makes summarization and visualization substantially more difficult.

One alternative is the study of multiparameter persistence through one-dimensional affine slices to produce \textit{fibered barcodes} \cite{BotnanLesnick2022,Vipond2018}. More generally, multipersistence often requires alternative summaries, one-dimensional restrictions, or vectorized invariants rather than a direct analogue of the ordinary persistence barcode. Common invariants examined include multi-graded Betti numbers \cite{Knudson2008}, rank invariants \cite{CarlssonZomorodian2009}, and the Hilbert function \cite{Hilbert1890,CarlssonZomorodian2009,BotnanLesnick2022}. We use the same intuition of reducing multiparameter structure to one-dimensional filtrations but do not compute full fibered barcode invariants due to added computational cost and limited relevance to the main forecasting comparison. 
\subsubsection{Interleaving Distance}
Interleaving distance compares persistence modules directly: it measures the smallest filtration-scale shift needed for two modules to map into one another in a way that is compatible with their internal structure maps.

\begin{defn} Let $A,B$ be persistence modules over $\mathbb{R}^n$, and let $\varepsilon \geq 0$. Write $B(\varepsilon)$ for the shifted module defined by
\[
B(\varepsilon)_x = B_{x+\varepsilon\mathbf{1}},
\]
where $\mathbf{1}=(1,\ldots,1)$, so that $x+\varepsilon\mathbf{1}$ shifts each filtration coordinate forward by $\varepsilon$. The persistence modules \(A\) and \(B\) are said to be \(\varepsilon\)-interleaved if there exist morphisms
\[
f:A\to B(\varepsilon),
\qquad
g:B\to A(\varepsilon),
\]
such that the following diagrams commute:
\[
\begin{tikzcd}
A \arrow[r,"f"] \arrow[rr,bend right=35,"\varphi_A^{2\varepsilon}"'] 
& B(\varepsilon) \arrow[r,"g(\varepsilon)"] 
& A(2\varepsilon)
\end{tikzcd}
\]
and
\[
\begin{tikzcd}
B \arrow[r,"g"] \arrow[rr,bend right=35,"\varphi_B^{2\varepsilon}"'] 
& A(\varepsilon) \arrow[r,"f(\varepsilon)"] 
& B(2\varepsilon).
\end{tikzcd}
\]
We then define the \textit{interleaving distance} to be
\[
d_{\mathrm{Int}}(A,B)
=
\inf\{\varepsilon \geq 0 : A \text{ and } B \text{ are } \varepsilon\text{-interleaved}\}.
\] \end{defn}

Thus, small \(d_{\mathrm{Int}}(A,B)\) indicates that \(A\) and \(B\) contain topological features that appear and persist at similar filtration scales, while large \(d_{\mathrm{Int}}(A,B)\) indicates that the modules only become compatible after a substantial shift in scale.

\section{Methodology}
\subsection{Data}
A key question examined in this paper is whether compression of global prediction market belief states through PCA and persistent homology improves forecasting of binary market outcomes beyond market-implied probabilities, as measured by YES-contract prices. We compare these compressed representations against the raw market-implied baseline and standard statistical and machine-learning baselines. The empirical universe consists of macro, crypto, and crypto-adjacent Polymarket contracts, selected because they contain many related markets exposed to common latent drivers. Macro markets are included to introduce richer outcome diversity, while crypto and crypto-adjacent markets provide related families of contracts whose probabilities plausibly share underlying structure. Table \ref{tab:data_universe_summary} summarizes the resulting market universe and data sources.

\begin{table}[htbp]
\centering
\small
\renewcommand{\arraystretch}{1.18}
\begin{tabularx}{0.95\textwidth}{lX}
\hline
\textbf{Characteristic} & \textbf{Value} \\
\hline
Platform & Polymarket \\
Market universe & Macro, crypto, and crypto-adjacent resolved binary markets \\
Selection criteria & Binary resolution; valid YES token; resolved outcome; sufficient price observations; quality checks; sufficient overlap with analysis period \\
Total markets & 171 \\
Outcome split & 152 NO, 19 YES \\
Primary forecast variable & Market-implied YES probability from YES-contract price \\
Primary data representation & Hourly active-set panel \\
Data sources & Gamma API; Data API; CLOB API \\
Pipeline outputs & Metadata; selected universe; YES prices; CLOB-sampled prices; long-form price data; hourly active-set panels \\
\hline
\end{tabularx}
\caption{Summary of the prediction market universe and data pipeline.}
\label{tab:data_universe_summary}
\end{table}

\begin{table}[H]
\centering
\small
\renewcommand{\arraystretch}{1.15}
\begin{tabular}{lc}
\hline
\textbf{Market family} & \textbf{Number of markets} \\
\hline
Macro Fed / rates & 96 \\
BTC price & 56 \\
ETH price & 14 \\
Crypto policy & 2 \\
MicroStrategy & 2 \\
Crypto regulation & 1 \\
\hline
\textbf{Total} & \textbf{171} \\
\hline
\end{tabular}
\caption{Exclusive market-family counts in the final universe.}
\label{tab:market_family_counts}
\end{table}
Evaluation is conducted fold-by-fold over chronological splits. A \textit{chronological fold} is a train-test time-ordered split.

\subsection{Active-Set Belief States}
An important fact to note when dealing with prediction markets is that markets do not necessarily exist at the same time. Hence, a standard panel which pairs every timestamp with every market implicitly treats inactive markets as missing observations. If a market has not yet opened or has already resolved, then its absence from the panel is entirely structural, and treating such entries as ordinary missing values risks forcing PCA or PH to train on market availability instead of the evolution of market belief.

To avoid this lapse, we construct an \textit{active-set} belief representation. Let $M$ denote the final universe of markets. For timestamp $t$, define
\[
\mathcal A_t=\{i\in M: \text{market }i\text{ has opened and has not yet resolved at time }t\}.
\]

Because the active markets at a timestamp come from several related but distinct groups of contracts, we also organize markets by family.

\begin{defn}
A \textit{market family} \(f\in\mathcal F\) is a category of markets grouped by the primary economic or event driver of the contract, such as BTC price, ETH price, Macro Fed/rates, crypto policy, MicroStrategy, or crypto regulation.
\end{defn}
A supervised row is included if and only if $i\in\mathcal A_t$, so each observation can be represented by a pair $(i,t)$ with $i\in\mathcal A_t$. For each such pair, we record market probability $p_{i,t}$, the binary outcome $Y_i$, the timestamp $t$, the market family, and the active market count
\[
A_t=|\mathcal A_t|\in\mathbb{Z}_{\geq 0}.
\]
Our final active-set supervised dataset contains 532,979 market-time observations across 171 resolved binary markets and 14,464 hourly timestamps. The active market count has median 34, ranging from 17 to 68. The target variable is the final market resolution:
\begin{equation}
  Y_i=\begin{cases}
    1, & \text{if market $i$ resolves YES}.\\
    0, & \text{otherwise}.
  \end{cases}
\end{equation}

For each market family $f$, define the active family set
\[
\mathcal A_{f,t}=\{i\in\mathcal A_t: i\text{ belongs to family }f\}.
\]
We then compute
\[
c_{f,t}=|\mathcal A_{f,t}|,
\]
\[
\mu_{f,t}=\frac{1}{c_{f,t}}\sum_{i\in\mathcal A_{f,t}}p_{i,t},
\]
\[
m_{f,t}=\operatorname{median}\{p_{i,t}:i\in\mathcal A_{f,t}\},
\]
and
\[
s_{f,t}=\operatorname{std}\{p_{i,t}:i\in\mathcal A_{f,t}\}.
\]
These are the active count, mean active probability, median active probability, and probability dispersion for family $f$ at time $t$. The combination of these features leaves us with the desired family-level belief-state vector
\[
F_t=(c_{f,t},\mu_{f,t},m_{f,t},s_{f,t})_{f\in\mathcal F}.
\]
In our final dataset, this computation outputs 24 family-state features across six market families. The primary advantage of this construction is that $F_t$ summarizes the global market belief state without requiring the same markets to exist at every timestamp. It is less granular than the raw, market-level panel, but it is cleaner for compression because it avoids treating inactive markets as missing values. Hence, $F_t$ becomes the canonical belief-state object used for PCA and topology comparisons.

When a family-level feature is missing because no market from that family is active at a timestamp, imputation is learned only from the training timestamps of the current chronological fold. The fitted imputer is then applied to later test timestamps, so future observations do not enter the construction of past belief states. 
\subsection{PCA Benchmark}
The PCA benchmark is the first compressed-state forecasting model evaluated in this paper. It is intended to test whether the family-level belief-state vector $F_t$ contains relevant information beyond individual market-implied probabilities $p_{i,t}$. This benchmark also serves as the primary linear, non-topological comparison for later PH models. 
For each chronological fold we begin with the family-state matrix whose rows are timestamps $t$ and columns are the 24 active-set family features. Thus, for each test timestamp $t$, the corresponding PCA coordinates $z_t^{PCA}$ are produced using principal components of past data. 
The PCA features are then merged onto the supervised market-time dataset. Each row $(i,t)$ receives the same timestamp-level PCA state $z_t^{PCA}$ but keeps its own market probability $p_{i,t}$, active market count $A_t$, and final outcome $Y_i$. This creates the PCA forecasting dataset $$(i,t,p_{i,t}, A_t, z_t^{PCA}, Y_i).$$ 
The raw market baseline $\hat{p}_{i,t}$ is $p_{i,t}$, and the PCA-enhanced benchmark is a logistic regression model of the form $$\hat{p}_{i,t} = \Lambda(\beta_0 + \beta_1p_{i,t} + \beta_2A_t + \gamma^{\mathsf{T}}z_t^{PCA}),$$
where $\Lambda(x) = \frac{1}{1+e^{-x}}$. 
The individual market probability is included in every PCA model, meaning that PCA is being tested on whether the compressed global belief state improves the forecast after the inclusion of market price. Several PCA configurations were considered, including fixed component counts and variance-threshold choices. The family-level PCA representation was selected as it improved both Brier Score and log loss. Table \ref{tab:pca_benchmark} compares the market-implied probability baseline to the final two-component family-level PCA logistic benchmark. \newline 
\begin{table}[H]
\centering
\begin{tabular}{lcc}
\hline
\textbf{Model} & \textbf{Brier Score} & \textbf{Log Loss} \\
\hline
Market-implied probability & 0.0489 & 0.1673 \\
Family-level PCA logistic model & 0.0432 & 0.1565 \\
\hline
\end{tabular}
\caption{Forecasting performance of the market-implied probability baseline and the best family-level PCA benchmark.}
\label{tab:pca_benchmark}
\end{table}
This corresponds to an approximate $11.7\%$ Brier improvement and $6.5\%$ log loss improvement over the market-implied baseline (further statistical analysis is examined in Section 4). Therefore, the family-level PCA model becomes the linear benchmark that later topology-enhanced models seek to improve upon. 
It is important to note that this model was selected due to forecasting performance, not explained variance. Although higher-dimensional PCA specifications did capture more variation from the family-state matrix, they did not improve out-of-sample forecasting, with the two-component PCA model achieving the best performance among the family-level PCA variants. The 5-PC model performed slightly worse in both metrics, and all models with 10 PCs or more substantially worsened log loss. This suggests that the vast majority of forecasting-relevant linear structure is captured by the leading two PCs, while additional components largely introduce noise or less stable variation.  
\subsection{Local Topology}
The initial single-parameter topological exploration of the paper is designed as an enhancement to PCA, since persistent homology is most commonly used in supervised learning as a feature representation whose effectiveness depends on how persistence information is vectorized and integrated into a downstream model \cite{PunXiaLee2018}. We decide to focus on local topology construction in order to determine whether the geometry of a market's correlated neighborhood contains information about that market's eventual resolution beyond the implied probability and PCA-compressed global state. 
For each market $i$ we construct a local neighborhood of related markets using training period correlations. The neighborhood $N_k(i)$ is defined to be the set of $k$ markets most closely correlated with $i$, where the main specification uses $k = 20$. Each market $i$ is analyzed against $N_k(i)$, 
and topology is constructed from a correlation-distance Vietoris-Rips filtration. For two markets $i,j \in N_k(i)$, let $\rho_{ij}$ denote their training-period correlation. We define distance as \newline \centerline{$d_{ij} = 1 - |\rho_{ij}|.$} \newline
$d_{ij}$ is small when there is a strong correlation between $i,j$ in either direction, and large when the relation is weak. We take the absolute value of $\rho_{ij}$ since both positive and negative relationships are structurally significant. Either might provide forecasting insight since both co-movement and inverse movement can reveal shared exposure to an underlying factor.
For each $N_k(i)$ the distance matrix
\[
D_i=\{d_{ij}\}
\]
defines a finite dissimilarity space, a symmetric pairwise dissimilarity derived from correlations, on which we compute a Vietoris-Rips filtration. Markets that are strongly related enter the same connected structure early in the filtration, while weaker relationships appear at larger $\varepsilon$. The resulting filtration is denoted
\[
VR_\varepsilon(N_k(i),d),
\]
where $\varepsilon$ ranges over the correlation-distance scale. PH is then computed on this filtration in dimensions \(H_0\) and \(H_1\). The \(H_0\) persistence diagram records the birth and merging of connected components as the filtration scale increases, while the \(H_1\) persistence diagram records one-dimensional cycles that appear and later disappear. Each local persistence diagram is then converted into numerical features for use by the forecasting models. For both \(H_0\) and \(H_1\), we extract the number of finite persistence intervals, total persistence, maximum persistence, and persistence entropy. These quantities summarize how much connected-component or cycle structure appears in the local market neighborhood and how long that structure survives across filtration scales. A neighborhood whose diagram contains only many short-lived intervals is interpreted differently from one with longer-lived intervals, since longer persistence suggests structure that is less likely to arise from small perturbations in the filtration. 
Topological features are then attached to the same supervised market-time rows utilized in the PCA benchmark, so each observation has the form $$(i,t,p_{i,t}, A_t, z_t^{PCA}, T_i, Y_i),$$ 
for $T_i$ denoting the local topology feature vector for market $i$. When topology is allowed to vary with time, the notation becomes $T_{i,t}$ but the key object remains a topological summary of the market's correlated local neighborhood. The primary topology-enhanced logistic model is defined $$\hat{p}_{i,t}^{PCA + TDA} = \Lambda(\beta_0 + \beta_1p_{i,t} + \beta_2A_t + \gamma^{\mathsf{T}}z_t^{PCA} + \theta^{\mathsf{T}}T_i).$$
Here, \(p_{i,t}\) captures the individual market's current belief, \(A_t\) controls for variation in the size of the active market universe, \(z_t^{PCA}\) captures the global linear belief state, and \(T_i\) captures local nonlinear neighborhood geometry. The coefficient vector \(\theta\) measures whether the topological summaries add forecasting information beyond market price and PCA state after accounting for the other features in the model. For robustness, we compare topology-enhanced models against models containing simple neighborhood controls, including neighborhood mean probability, probability dispersion, average correlation strength, and correlation variability. These controls summarize first-order properties of the local neighborhood without using persistent homology. This tests whether PH contributes information about the shape of local market relationships or is simply repackaging non-topological facts about correlation strength, dispersion, or neighborhood density. 
\subsection{Multipersistence Extension}
The local $VR_\varepsilon$ construction above utilizes a single scale parameter $\varepsilon$ for filtration. Within the context of prediction markets, such filtration is limiting as two markets can be close in one sense (i.e. price movement), while differing substantially in behavior not explained by global PCA state. 
To address this, we construct a multipersistence-adjacent extension of the local topology model. We utilize the intuition of varying two filtration criteria, but do not compute entire multiparameter persistence modules or full barcode invariants due to unnecessary computational complexity. 
The first filtration coordinate $\alpha$ is the same correlation distance utilized on the local $VR_\varepsilon$ construction. For markets $i,j$, again let $\rho_{ij}$ denote their training period correlation, and define $$\alpha_{ij} = 1 - |\rho_{ij}|,$$ 
functioning identically to $\varepsilon$ in the section above. 
The second filtration parameter is designed to capture nonlinear or residual structure not explained by PCA. Within each chronological fold, PCA is fit on the training family-state matrix, and the PCA reconstruction is then subtracted from the original family-state representation to form residual behavior. Informally, if PCA captures dominant linear belief factors, the residual component measures what is left after the removal of those linear factors. Let $r_i$  denote this residual behavior for market $i$ after removal of the PCA-explained component. For markets $i,j$ define residual distance to be $$\delta_{ij} = ||r_i - r_j||.$$
Under these definitions, $\alpha_{ij}$ measures correlation-neighborhood proximity while $\delta_{ij}$ measures similarity in PCA-residual behavior. A bifiltration dependent upon these parameters includes a simplex $\sigma$ at parameter value $(a,b)$ if and only if all pairwise relationships in the simplex satisfy
\[
\alpha_{ij}\leq a \quad \text{and}\quad \delta_{ij}\leq b \qquad \forall i,j\in\sigma.
\]
Thus, the bifiltration is indexed by $(a,b)\in\mathbb{R}^2$, where $a$ controls correlation-distance scale and $b$ controls residual-distance scale. A group of markets enters the simplicial complex only if it is sufficiently connected in both raw correlation geometry and PCA-residual geometry. 
Due to computational complexity surrounding summarization of full multiparameter persistence in a forecasting model, we forgo computation of multiparameter Betti tables, rank invariants, and fibered barcode invariants. Instead, we utilize one-dimensional reductions of this two-parameter construction, with each reduction converting the bifiltration into an ordinary one-parameter filtration, after which standard \(H_0\) and \(H_1\) persistent homology features are computed. These reductions retain the influence of both filtration parameters by examining paths through the bifiltration, so the resulting topological summaries still reflect joint variation in correlation distance and PCA-residual distance while preserving the forecasting design of the local topology model. The resultant feature vector is denoted $T_i^{MP},$
or $T_{i,t}^{MP}$ in time-varying specifications. These features are similarly added to the PCA-enhanced forecasting model, which becomes $$\hat{p}_{i,t}^{PCA + MP} = \Lambda(\beta_0 + \beta_1p_{i,t} + \beta_2A_t + \gamma^{\mathsf{T}}z_t^{PCA} + \theta^{\mathsf{T}}T_i^{MP}).$$
This interpretation represents a stronger hypothesis than the original local VR-topology, asking whether topology becomes more informative when computed on information unexplained by PCA in addition to that explained by it. By utilizing multipersistence as a modeling principle, the empirical question becomes whether these complementary filtration criteria produce topological features that improve forecasting beyond market probability, PCA, and simple neighborhood controls. 
\subsection{Forecasting Models}
The forecasting task is defined at the market-time level, with each observation being a pair $(i,t)$ for market $i$ active at timestamp $t$. The target is the binary resolution $Y_i$, with each model outputting a probability forecast $\hat{p}_{i,t}.$ Informally, these models are evaluated based on whether additional compressed or topological features improve probabilistic forecasts beyond prior model iterations. 
\subsubsection{Market-Implied Probability}
The first model uses the market price directly as the forecast, taking the price to be the market-implied probability of event occurrence. $$\hat{p}_{i,t} = p_{i,t}.$$
This serves as the natural benchmark, since prediction market prices themselves are commonly taken as probability forecasts. Our hope is that models using PCA, topology, or nonlinear learning improve upon this baseline. 
\subsubsection{Logistic Regression + PCA}
The first fitted model is the PCA-enhanced logistic regression benchmark $$\hat{p}_{i,t}^{PCA} = \Lambda(\beta_0 + \beta_1p_{i,t} + \beta_2A_t + \gamma^{\mathsf{T}}z_t^{PCA}),$$
testing whether or not compressed global belief-state vectors contain predictive information beyond market price. This model serves as the primary linear compression benchmark. 
\subsubsection{Logistic Regression + PCA + Neighborhood Controls}
We then add simple local-neighborhood controls. $$\hat{p}_{i,t}^{PCA + C} = \Lambda(\beta_0 + \beta_1p_{i,t} + \beta_2A_t + \gamma^{\mathsf{T}}z_t^{PCA} + \eta^{\mathsf{T}}C_i),$$
where $C_i$ denotes non-topological neighborhood summaries including mean probability, probability dispersion, average correlation strength, and correlation variability. 
\subsubsection{Logistic Regression + PCA + Controls + Local VR Topology}
The next supplement to the above model is the local correlation-distance VR topology feature vector $T_i^{VR}.$ $$\hat{p}_{i,t}^{PCA + C + VR} = \Lambda(\beta_0 + \beta_1p_{i,t} + \beta_2A_t + \gamma^{\mathsf{T}}z_t^{PCA} + \eta^{\mathsf{T}}C_i + \theta^{\mathsf{T}}T_i^{VR}).$$
\subsubsection{Logistic Regression + PCA + Controls + Multipersistence-Inspired Topology}
We extend the local VR-topology features with multipersistence-inspired features $T_i^{MP}$ described in 3.5, which represents topology computed through one-dimensional reductions of a bifiltration. $$\hat{p}_{i,t}^{PCA + C + MP} = \Lambda(\beta_0 + \beta_1p_{i,t} + \beta_2A_t + \gamma^{\mathsf{T}}z_t^{PCA} + \eta^{\mathsf{T}}C_i + \theta^{\mathsf{T}}T_i^{MP}).$$
Again, this model looks to answer whether incorporation of PCA-residual structure in addition to correlation-neighborhood structure improves forecasting ability through additional information capture. 
\subsubsection{XGBoost + PCA + Controls}
To expand testing beyond linear logistic models, we include a nonlinear tree-based benchmark. $$\hat{p}_{i,t}^{XGB} = f_{XGB}(p_{i,t}, A_t, z_t^{PCA}, C_i).$$
This XGBoost model utilizes the same non-topological feature set as the \textit{PCA + Controls} model, but allows for nonlinear interactions among variables, thereby testing whether information that topology captures is nonlinear structure that XGBoost can already learn from PCA and controls. 
\subsubsection{XGBoost + PCA + Controls + Multipersistence-Inspired Topology} 
The next model incorporates our multipersistence-inspired topology to XGBoost through incorporation of the $T_i^{MP}$ vector:  
$$\hat{p}_{i,t}^{XGB} = f_{XGB}(p_{i,t}, A_t, z_t^{PCA}, C_i, T_i^{MP}),$$
asking whether multipersistence-inspired features add predictive information  after the downstream model is allowed to learn nonlinear interactions. 
\subsubsection{Random Forest + PCA + Controls}
The Random Forest benchmark uses the same non-topological feature set as the XGBoost controls model:
$$\hat{p}_{i,t}^{RF} = f_{RF}(p_{i,t}, A_t, z_t^{PCA}, C_i),$$
serving as a separate nonlinear comparison.

\subsubsection{Random Forest + PCA + Controls + Multipersistence-Inspired Topology}
The final model adds the multipersistence-inspired topology vector to the Random Forest feature set:
$$\hat{p}_{i,t}^{RF} = f_{RF}(p_{i,t}, A_t, z_t^{PCA}, C_i, T_i^{MP}).$$

This model tests whether the strongest topological feature construction continues to provide forecasting information under a nonlinear model that is not gradient-boosted. If the multipersistence-inspired features improve both XGBoost and Random Forest, then the result is less likely to be driven by a peculiarity of one downstream algorithm and more consistent with the interpretation that local topological structure contains independent predictive signal.

\begin{table}[htbp]
\centering
\footnotesize
\renewcommand{\arraystretch}{1.25}
\begin{tabular}{c p{0.23\textwidth} p{0.34\textwidth} p{0.30\textwidth}}
\hline
\textbf{\#} & \textbf{Model} & \textbf{Feature Set} & \textbf{Purpose} \\
\hline

1 & Market probability 
& $p_{i,t}$ 
& Raw prediction market baseline \\

2 & Logit + PCA 
& $p_{i,t}$, $A_t$, $z_t^{PCA}$ 
& Linear compressed belief-state benchmark \\

3 & Logit + PCA + controls 
& $p_{i,t}$, $A_t$, $z_t^{PCA}$, $C_i$ 
& Tests simple local-neighborhood information \\

4 & Logit + PCA + controls + VR topology 
& $p_{i,t}$, $A_t$, $z_t^{PCA}$, $C_i$, $T_i^{VR}$ 
& Tests local correlation-distance PH \\

5 & Logit + PCA + controls + multipersistence
& $p_{i,t}$, $A_t$, $z_t^{PCA}$, $C_i$, $T_i^{MP}$ 
& Tests two-criterion topology-inspired features \\

6 & XGBoost + PCA + controls 
& $p_{i,t}$, $A_t$, $z_t^{PCA}$, $C_i$ 
& Nonlinear non-topological benchmark \\

7 & XGBoost + PCA + controls + multipersistence 
& $p_{i,t}$, $A_t$, $z_t^{PCA}$, $C_i$, $T_i^{MP}$ 
& Nonlinear topology-enhanced benchmark \\

8 & Random Forest + PCA + controls
& $p_{i,t}$, $A_t$, $z_t^{PCA}$, $C_i$
& Bagged-tree nonlinear benchmark \\

9 & Random Forest + PCA + controls + multipersistence
& $p_{i,t}$, $A_t$, $z_t^{PCA}$, $C_i$, $T_i^{MP}$
& Non-boosted nonlinear topology benchmark \\

\hline
\end{tabular}
\caption{Forecasting models evaluated in the main comparison.}
\label{tab:forecasting_models}
\end{table}

\subsection{Evaluation}
Let \(\hat p_{i,t}\in[0,1]\) denote a model's predicted probability that market \(i\) resolves YES, and let \(Y_i\in\{0,1\}\) denote the realized outcome. The primary evaluation metrics are Brier Score,
\[
B=\frac{1}{N}\sum_{(i,t)}(\hat p_{i,t}-Y_i)^2,
\]
and log loss,
\[
L=-\frac{1}{N}\sum_{(i,t)}
\left[
Y_i\log(\hat p_{i,t})+(1-Y_i)\log(1-\hat p_{i,t})
\right].
\]
Brier Score measures squared probability error, while log loss penalizes confident wrong predictions more sharply. We also report AUC to measure ranking performance, though the main comparisons are based on Brier Score and log loss because the models are intended to produce calibrated probability forecasts.

Model comparisons are reported against the market-implied probability baseline and the PCA + controls benchmark. The main model-comparison tables report unweighted means of the fold-level scores. Calibration is examined separately in Section \ref{sec:calibration}, where predictions are grouped by probability bucket and compared to realized YES rates.
\section{Results}
\subsection{Dataset Summary}
The construction of our dataset object is outlined in 3.1. In summary, it consists of \begin{itemize}[noitemsep, topsep=0pt]
    \item 171 markets.
    \item 532,979 market-time observations.
    \item Date range: 08/06/2024--04/01/2026.
    \item YES rates of $10.1\%$ by market-time rows and $11.1\%$ by unique markets.
    \item Active market count ranging from 17 to 68, with median 34.
    \item Family distribution: 96 macro Fed/rates, 56 BTC price, 14 ETH price, 2 crypto policy, 2 MicroStrategy, and 1 crypto regulation market.
\end{itemize}
\subsection{Model Comparison}
The raw YES market price performs well as a baseline forecasting instrument, since it represents the aggregate belief of traders at time $t$, with additional volume compounding accuracy. However, it is not fully efficient in this dataset. The PCA model improves performance, indicating that family-level belief states contain useful information beyond that captured in individual market prices. Adding local neighborhood controls improves performance further, suggesting that the immediate surrounding environment of a market also contains relevant information. \newline
\begin{table}[H]
\centering
\small
\begin{tabular}{lccc}
\hline
\textbf{Model} & \textbf{Brier Score} & \textbf{Log Loss} & \textbf{AUC} \\
\hline
Market-implied probability & 0.0489 & 0.1673 & 0.9419 \\
Logit + PCA & 0.0432 & 0.1565 & 0.9436 \\
Logit + PCA + controls & 0.0362 & 0.1355 & 0.9492 \\
\hline
\end{tabular}
\caption{Baseline, PCA, and local-control forecasting performance. Higher Area Under Curve (AUC) indicates stronger ranking performance.}
\label{tab:pca_controls_comparison}
\end{table}
The improvement via the addition of controls is particularly important for motivating topology. If local controls were useless then it would be unlikely that local topological summaries contained helpful information, but the control model was able to show that local market structure carries signal. Hence, we continue to ask whether this local structure has a shape that summary statistics do not capture. 
\begin{figure}[htbp]
    \centering
    \includegraphics[width=0.85\linewidth]{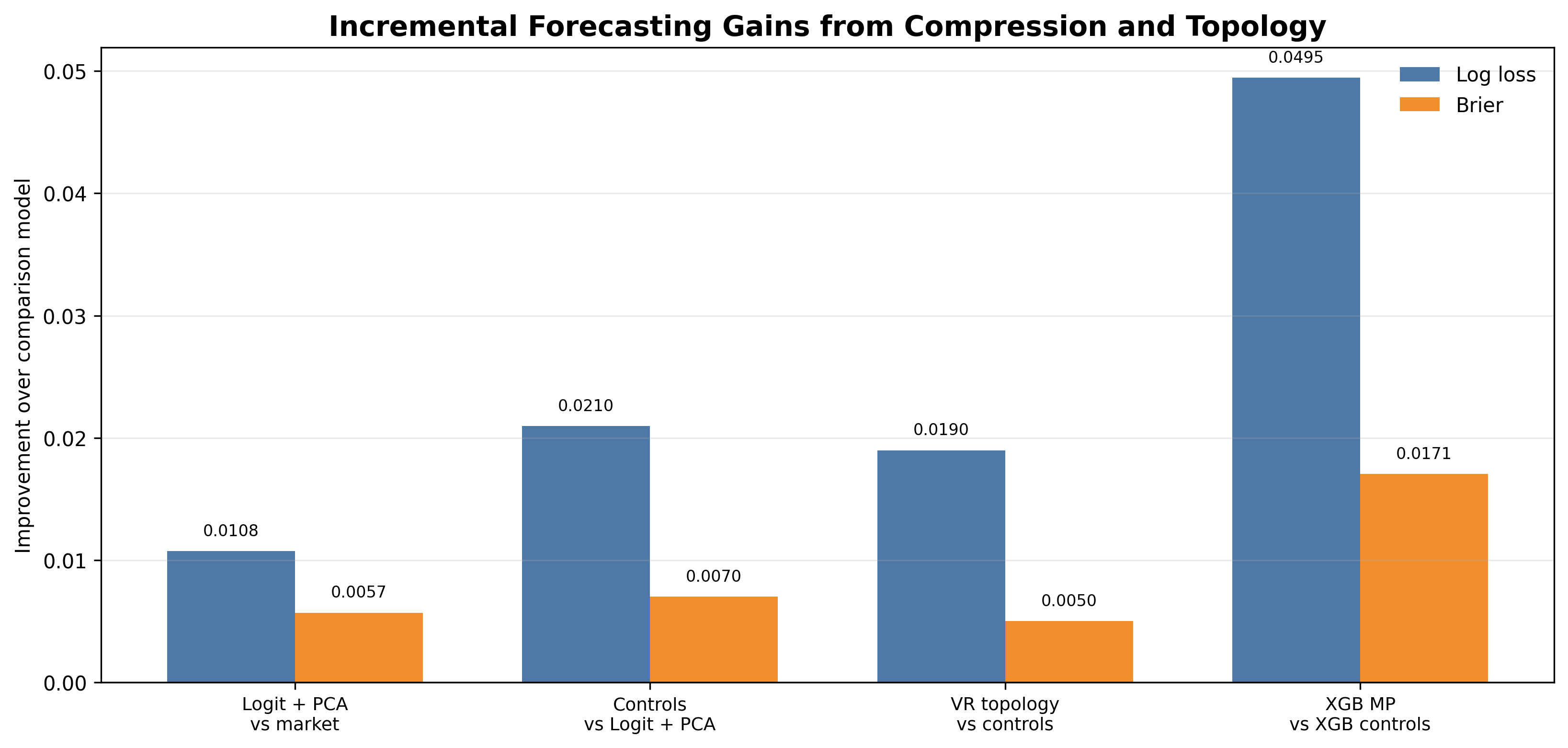}
    \caption{Incremental forecasting gains from adding PCA, local controls, and topology. Positive values indicate improvement over the comparison model.}
    \label{fig:incremental_model_gains}
\end{figure}
 Figure \ref{fig:incremental_model_gains} makes the staged improvement evident. PCA improves over market-implied baseline, local controls improve over PCA, and topology improves over the non-topological control model. 

 Improvement of the Vietoris-Rips model over the logistic PCA + Controls benchmark gives us our cleanest topological result, since the comparison is conservative. The $VR_\varepsilon$ model improves over a model that contains market probability, active market count, PCA state, and local controls, showing that topology is capturing structure of market relations across filtration scales. 
 \begin{table}[htbp]
\centering
\small
\begin{tabular}{lccc}
\hline
\textbf{Model} & \textbf{Brier Score} & \textbf{Log Loss} & \textbf{AUC} \\
\hline
Logit + PCA + controls & 0.0362 & 0.1355 & 0.9492 \\
Logit + PCA + controls + VR topology & 0.0311 & 0.1166 & 0.9629 \\
XGBoost + PCA + controls & 0.0402 & 0.1426 & 0.9712 \\
\hline
\end{tabular}
\caption{Vietoris-Rips topology comparison.}
\label{tab:vr_topology_comparison}
\end{table}
 \begin{figure}[htbp]
    \centering
    \includegraphics[width=0.85\linewidth]{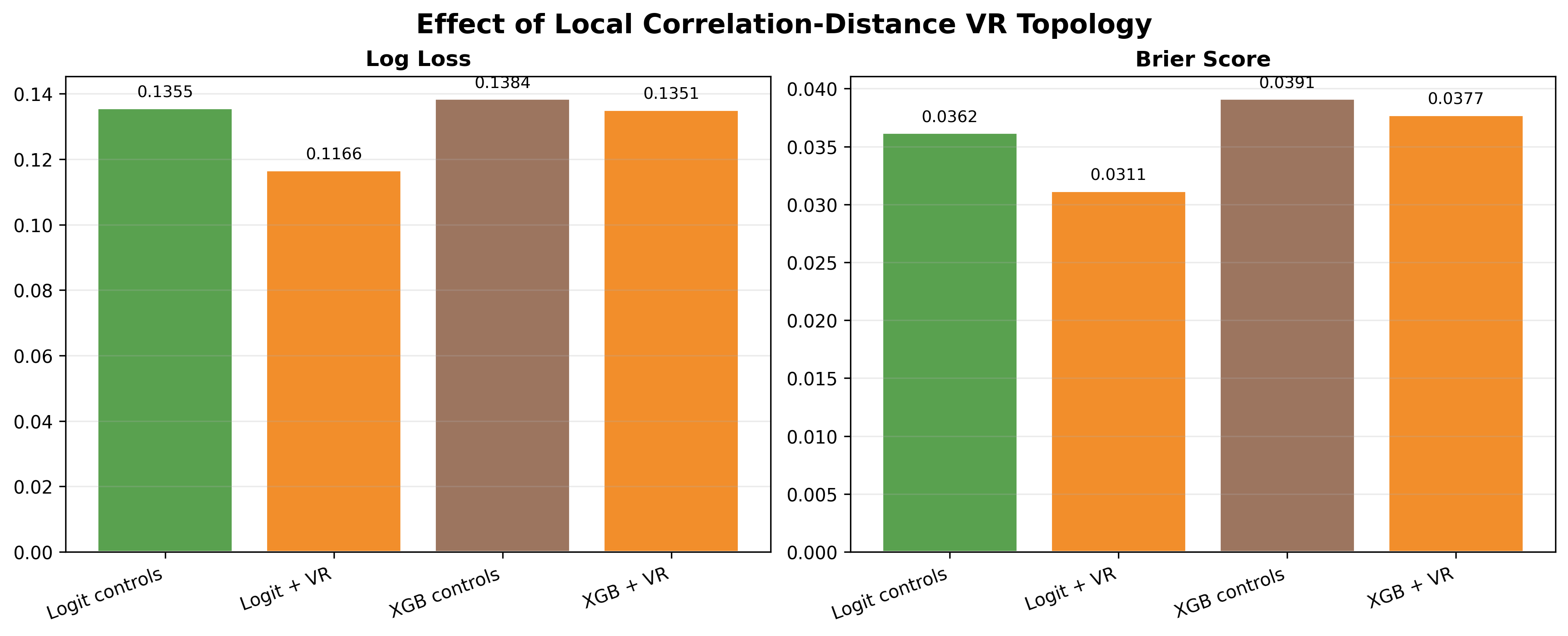}
    \caption{Effect of adding local correlation-distance Vietoris-Rips topology to logistic regression and XGBoost models.}
    \label{fig:vr_topology_model_comparison}
\end{figure}

The single-parameter VR-topology gain is larger for logistic regression than XGBoost. Logistic regression can only combine the input features linearly, so single-parameter persistent homology may help by giving the model nonlinear local geometry summaries. Conversely, XGBoost is already capable of learning nonlinear interactions between market probability, PCA features, active market count, and neighborhood controls, so the smaller gain suggests that some of the information contained in the VR summaries can be approximated through nonlinear interactions among non-topological variables. 
The multipersistence-inspired models extend this idea, allowing us to ask whether the local residual geometry following PCA compression contains useful forecasting structure. 
\begin{table}[H]
\centering
\small
\begin{tabular}{lccc}
\hline
\textbf{Model} & \textbf{Brier Score} & \textbf{Log Loss} & \textbf{AUC} \\
\hline
XGBoost + PCA + controls & 0.0402 & 0.1426 & 0.9712 \\
XGBoost + PCA + controls + MP & 0.0231 & 0.0932 & 0.9932 \\
Random Forest + PCA + controls & 0.0359 & 0.1544 & 0.9862 \\
Random Forest + PCA + controls + MP & 0.0126 & 0.0764 & 0.9944 \\
\hline
\end{tabular}
\caption{Multipersistence-inspired topology comparison.}
\label{tab:multipersistence_comparison}
\end{table}
\begin{figure}[htbp]
    \centering
    \includegraphics[width=0.85\linewidth]{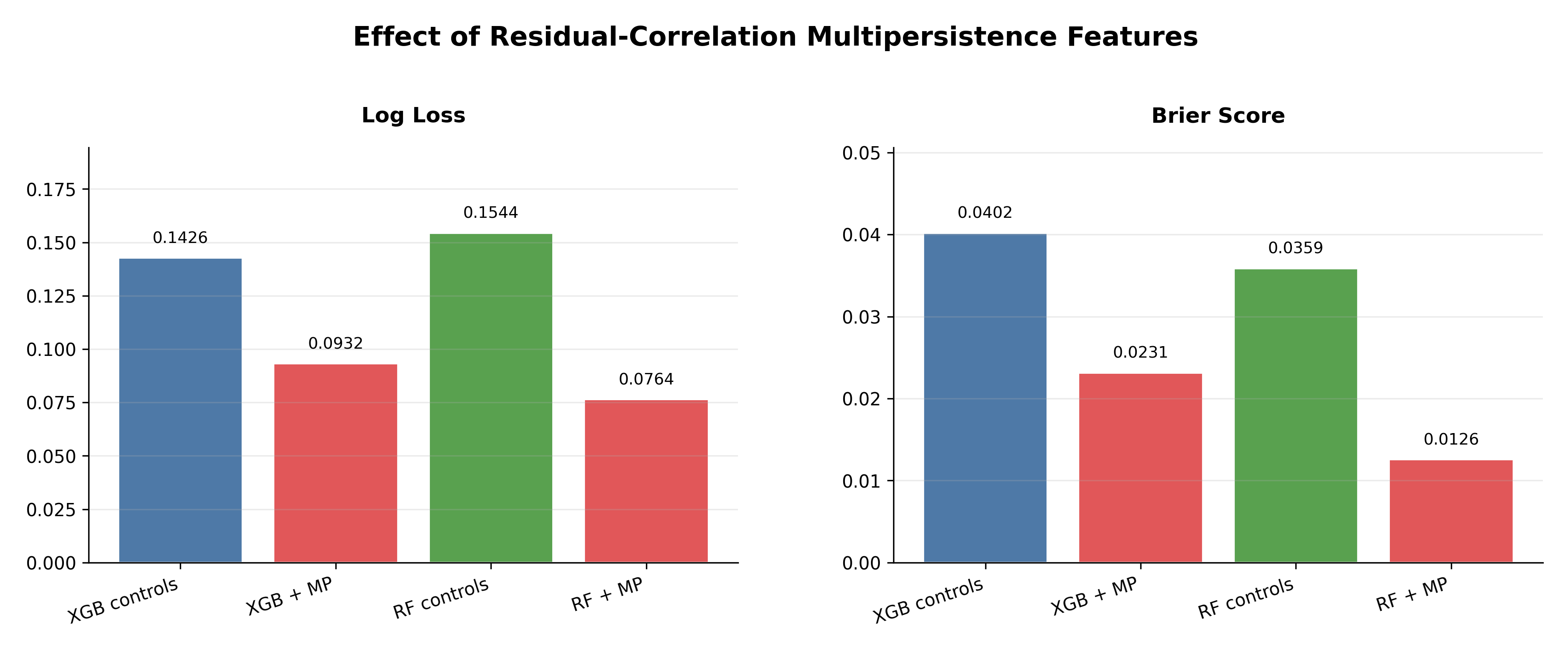}
    \caption{Effect of adding residual-correlation multipersistence features to XGBoost and Random Forest models.}
    \label{fig:multipersistence_model_comparison}
\end{figure}
Incorporation of multipersistence-inspired features into nonlinear models produces the strongest overall predictive results, with both XGBoost and Random Forest improving substantially with inclusion of PCA-residual and correlation distance scale factors. This suggests that the most useful topological signal is contained in the local residual structure left following PCA compression. Therefore, PCA and topology appear most useful when their roles are separated. 
We observe that a pairing of nonlinear learning with residual local-topology features yields the strongest forecasting performance, with the Random Forest multipersistence model having both the best Brier score and log loss.

However, the logistic VR result remains important since it more directly supports the claim that topology adds incremental information beyond PCA and controls. 
\begin{figure}[H]
    \centering
    \includegraphics[width=0.95\linewidth]{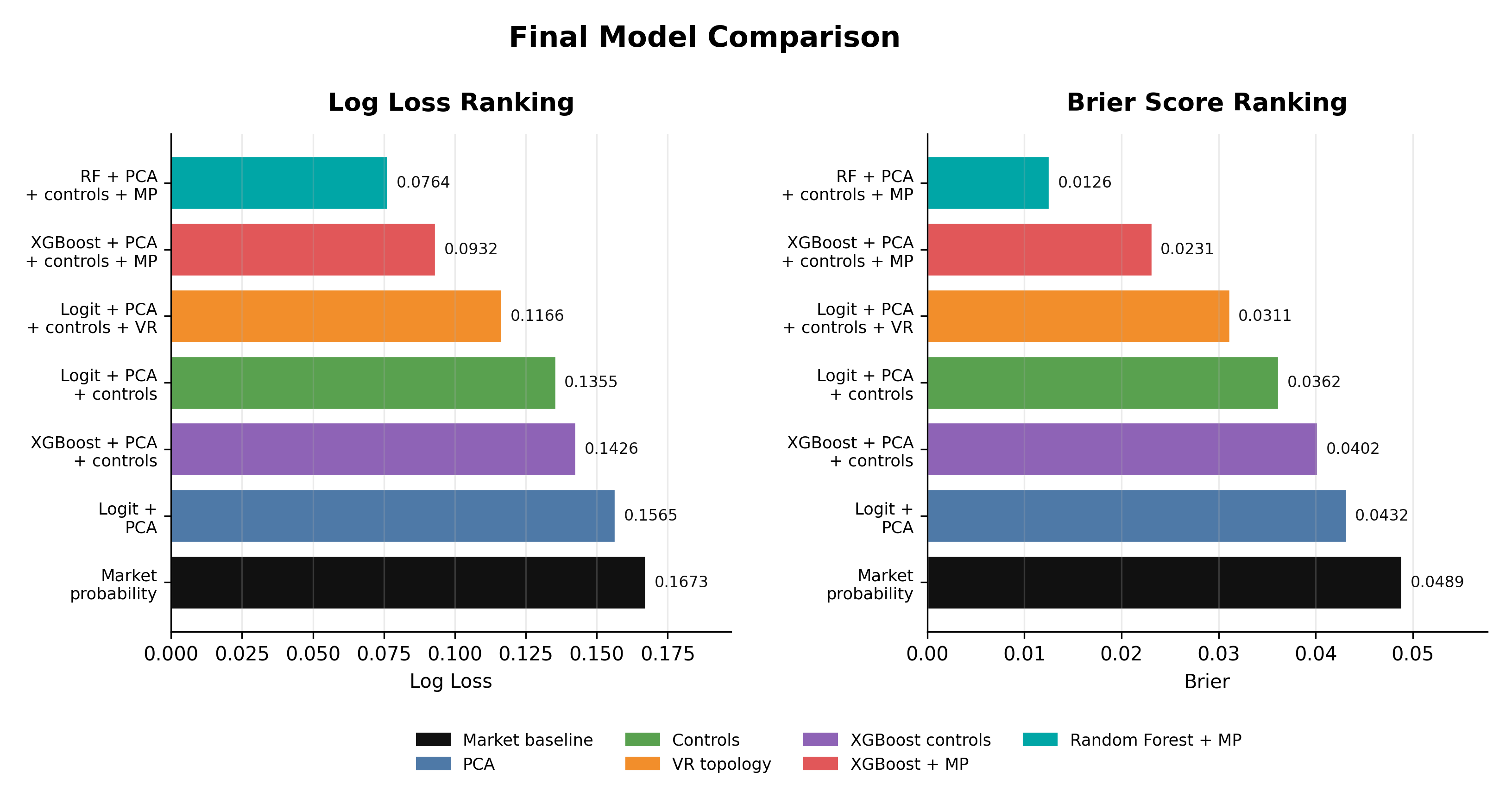}
    \caption{Final model ranking by log loss and Brier score. }
    \label{fig:final_model_ranking}
\end{figure}
We observe improvements in both log loss and Brier score when topology is added to the corresponding non-topological benchmarks. 
\subsection{Calibration}
\label{sec:calibration}
For a more comprehensive analysis of model behavior across the entire prediction range, we examine calibration as a central diagnostic. A well-calibrated model should assign probability $p$ to realized YES rates around $p.$ Mathematically, calibration requires
\[
\mathbb{P}(Y_i=1 \mid \hat p_{i,t}=p)=p,
\]
meaning that among observations assigned predicted probability \(p\), the realized YES frequency should equal \(p\). 
For evaluation, we group predictions into probability deciles, computing the average predicted probability and average realized YES rate for each model and decile. We also compute expected calibration error, given by the following formula:
\[
\operatorname{ECE}
=
\sum_{b=1}^{B}
\frac{n_b}{N}
\left|
\overline{p}_b-\overline{Y}_b
\right|,
\]
where $B$ denotes the number of probability bins, and within each $b \in \{1,\ldots,B\},$ $n_b$ gives number of observations, $\overline{p}_b$ is average predicted probability, and $\overline{Y}_b$ is the observed YES rate. Lower ECE is indicative of better calibration.

\begin{figure}[H]
    \centering
    \includegraphics[width=0.72\linewidth]{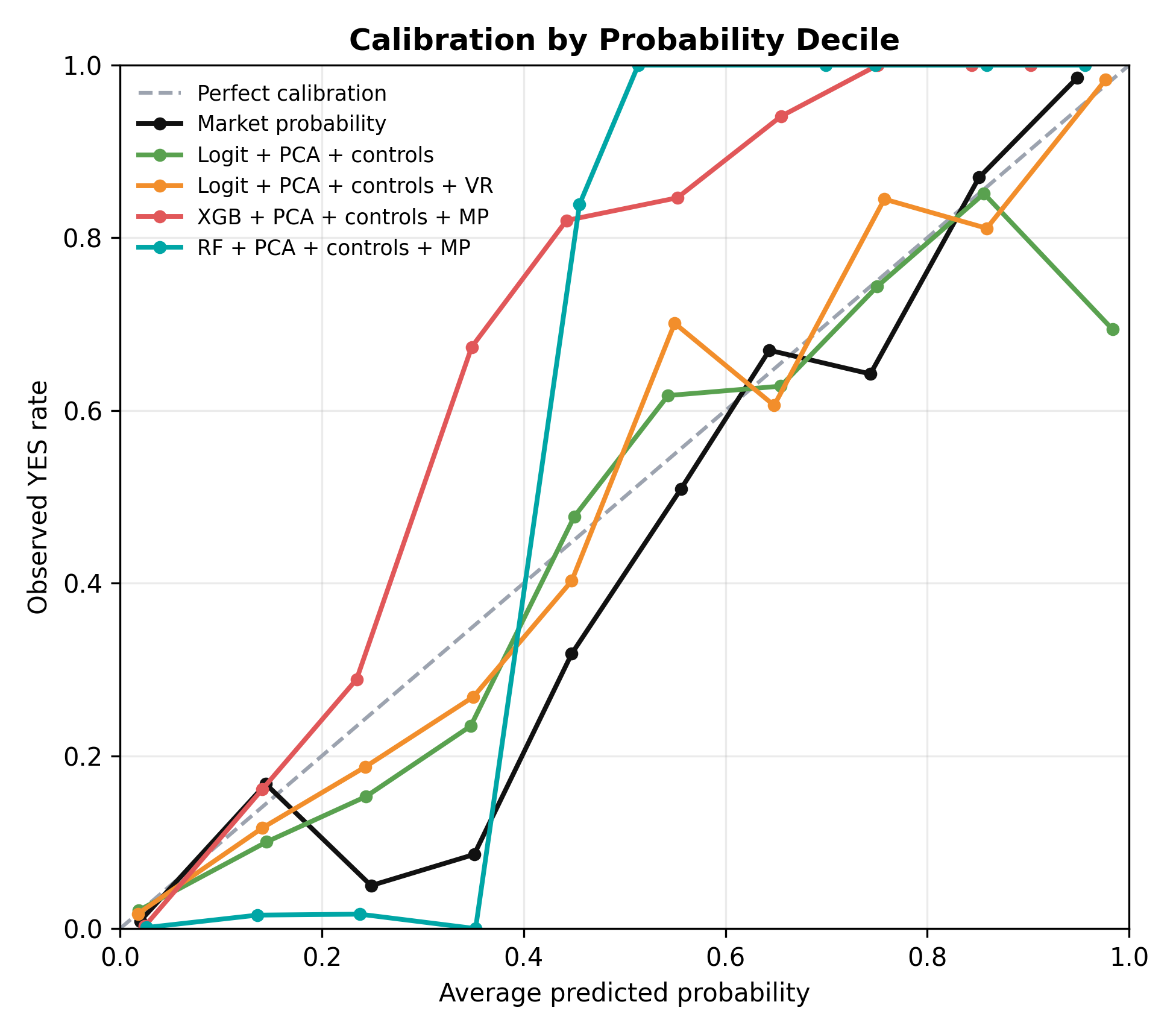}
    \caption{Calibration by probability decile.}
    \label{fig:final_calibration_by_decile}
\end{figure}

In Figure \ref{fig:final_calibration_by_decile} the dashed diagonal line represents perfect calibration, where average predicted probability equals observed YES rate. We see that logistic VR topology is the best calibrated model, which is consistent with its strong log loss performance. This figure also introduces a deeper intricacy to our initial results, as although Random Forest + Multipersistence and XGBoost + Multipersistence have strong Brier score and log loss performance, their calibration is less efficient than linear models. One possible explanation is that the nonlinear models improve discrimination by learning sharper interactions among PCA, neighborhood, and topological features, but these sharper probability adjustments lead to overconfident forecasts. The linear model is less flexible, but that restriction may also make its probability estimates smoother and better calibrated across deciles. 
\begin{table}[H]
\centering
\small
\begin{tabular}{lc}
\hline
\textbf{Model} & \textbf{Expected Calibration Error} \\
\hline
Logit + PCA + controls + VR topology & 0.0055 \\
Logit + PCA + controls & 0.0161 \\
Market-implied probability & 0.0374 \\
XGBoost + PCA + controls + multipersistence & 0.0403 \\
Random Forest + PCA + controls + multipersistence & 0.0562 \\
\hline
\end{tabular}
\caption{Expected calibration error for the main forecasting models.}
\label{tab:e10_calibration}
\end{table}
\begin{figure}[htbp]
    \centering
    \includegraphics[width=0.82\linewidth]{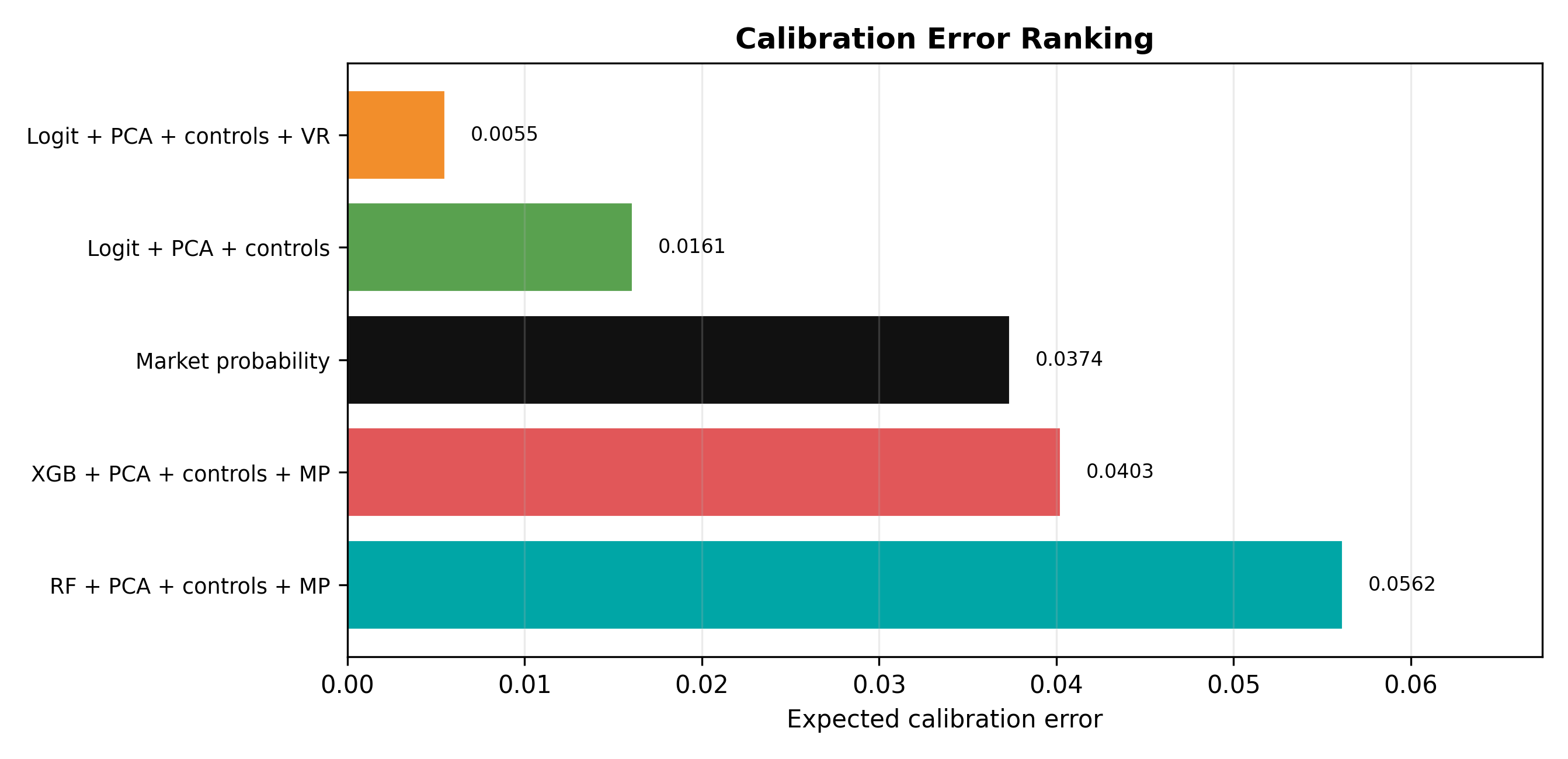}
    \caption{Expected calibration error ranking for the main forecasting models.}
    \label{fig:final_ece_ranking}
\end{figure}
Table \ref{tab:e10_calibration} and Figure \ref{fig:final_ece_ranking} summarize and rank the models based on ECE. Again, logistic PCA + Controls + VR topology had the lowest ECE, meaning it produces the most reliable probability estimates. Hence, our interpretation becomes layered. For pure predictive performance, nonlinear multipersistence models are strongest, while logistic VR topology has provided the most interpretable and calibrated probability forecasting. 
Since prediction markets are widely considered to be probability forecasts where positions are taken when the user believes such a forecast to be mispriced, and because the user can sell positions back to the market before resolution, calibration accuracy is an incredibly valuable result of the logit + PCA + controls + VR model. 
\subsection{Regime Analysis}
Through the aggregate model rankings, we have thus far observed that topology improves forecasting performance, but we have not yet evaluated where this improvement comes from. To do so, we inspect model gains across various market regimes including general domain, market family, and probability bucket. A positive gain indicates that the topological model has lower log loss than its corresponding non-topological benchmark. 
First we look at broad domain, separating crypto-related markets from macro / Fed rates markets. The logistic VR topology model improves in both domains, with a slightly larger gain in crypto markets. The nonlinear multipersistence models improve substantially in both domains, with Random Forest exhibiting especially large gains in macro markets. 
\begin{table}[H]
\centering
\small
\begin{tabular}{llccc}
\hline
\textbf{Model} & \textbf{Domain} & \textbf{Base Log Loss} & \textbf{Topology Log Loss} & \textbf{Gain} \\
\hline
Logit VR & Crypto & 0.1421 & 0.1198 & 0.0223 \\
Logit VR & Macro & 0.1227 & 0.1071 & 0.0156 \\
XGB MP & Crypto & 0.1234 & 0.0838 & 0.0396 \\
XGB MP & Macro & 0.1427 & 0.0964 & 0.0463 \\
RF MP & Crypto & 0.0962 & 0.0579 & 0.0383 \\
RF MP & Macro & 0.1822 & 0.0809 & 0.1013 \\
\hline
\end{tabular}
\caption{Topology log-loss gains by broad domain. }
\label{tab:regime_domain_gains}
\end{table}

The family-level split in Table \ref{tab:regime_family_gains} gives a more nuanced view, though the greatest observed gains in the MicroStrategy-related markets comprise a small minority of the overall dataset. 
\begin{table}[H]
\centering
\small
\begin{tabular}{llccc}
\hline
\textbf{Model} & \textbf{Family} & \textbf{Base Log Loss} & \textbf{Topology Log Loss} & \textbf{Gain} \\
\hline
Logit VR & MicroStrategy & 1.4458 & 1.2449 & 0.2009 \\
Logit VR & Macro Fed/rates & 0.1227 & 0.1071 & 0.0156 \\
Logit VR & BTC price & 0.0510 & 0.0406 & 0.0104 \\
Logit VR & ETH price & 0.0328 & 0.0267 & 0.0061 \\
\hline
XGB MP & MicroStrategy & 1.1334 & 0.6549 & 0.4785 \\
XGB MP & Macro Fed/rates & 0.1427 & 0.0964 & 0.0463 \\
XGB MP & ETH price & 0.0420 & 0.0342 & 0.0078 \\
XGB MP & Crypto policy & 0.0323 & 0.0246 & 0.0077 \\
\hline
RF MP & MicroStrategy & 0.2290 & 0.1074 & 0.1216 \\
RF MP & Macro Fed/rates & 0.1822 & 0.0809 & 0.1013 \\
RF MP & ETH price & 0.0729 & 0.0308 & 0.0421 \\
RF MP & Crypto policy & 0.0417 & 0.0083 & 0.0334 \\
\hline
\end{tabular}
\caption{Largest family-level topology gains by model comparison.}
\label{tab:regime_family_gains}
\end{table}

The probability bucket analysis in Table \ref{tab:regime_probability_bucket_gains} determines whether topological analysis is most informative in cases where markets are confident (corresponding to large or small YES prices), or when probabilities are closer to 50\%. 
\begin{table}[htbp]
\centering
\small
\begin{tabular}{llccc}
\hline
\textbf{Model} & \textbf{Probability Bucket} & \textbf{Base Log Loss} & \textbf{Topology Log Loss} & \textbf{Gain} \\
\hline
Logit VR & 50--75\% & 0.3219 & 0.2456 & 0.0763 \\
Logit VR & 25--50\% & 0.3228 & 0.2581 & 0.0647 \\
Logit VR & 75--90\% & 0.1217 & 0.0718 & 0.0499 \\
Logit VR & 10--25\% & 0.4572 & 0.4188 & 0.0384 \\
\hline
XGB MP & 50--75\% & 0.6104 & 0.4066 & 0.2038 \\
XGB MP & 90--95\% & 0.3616 & 0.1970 & 0.1646 \\
XGB MP & 75--90\% & 0.3806 & 0.2400 & 0.1405 \\
XGB MP & 10--25\% & 0.3444 & 0.2249 & 0.1195 \\
\hline
RF MP & 25--50\% & 0.3199 & 0.1450 & 0.1749 \\
RF MP & 10--25\% & 0.2572 & 0.0917 & 0.1655 \\
RF MP & 50--75\% & 0.2257 & 0.1263 & 0.0994 \\
RF MP & 5--10\% & 0.1485 & 0.0653 & 0.0832 \\
\hline
\end{tabular}
\caption{Largest topology gains by market probability bucket.}
\label{tab:regime_probability_bucket_gains}
\end{table}

The logistic VR model showed largest improvements in middle to high probability ranges, namely the 25-50\%, 50-75\%, and 75-90\% buckets, indicating that when market probabilities are close to 0 then there is little room for additional structure to improve the forecast. Topology appears most useful when market price contains enough uncertainty for the broader neighborhood structure to adjust the forecast. In these cases, correlated nearby markets can shift the model's estimate by indicating whether the individual contract is aligned with or deviating from the behavior of related contracts.

Altogether, this split regime analysis shows that topology gains are not concentrated within a single market domain or probability bucket. The strongest gains appear in market regimes where local structure plausibly matters: correlated BTC/ETH and Macro markets share broad latent drivers, MicroStrategy markets are tied to both firm-specific behavior and crypto exposure, and uncertain probability ranges leave more room for neighboring markets to adjust the forecast. Future work should test whether topology's predictive gains increase with shared market exposure and uncertainty.
\subsection{Significance Tests}
Because the dataset contains numerous market-time rows, observations are not ``independent'' in the ordinary sense. Rather, the same market appears repeatedly over time, with many markets clustered at similar timestamps. To account for this, we conducted clustered bootstrap tests and chronological fold-level improvement counts, looking to determine whether topological benefits persist when uncertainty is measured across markets, timestamps, and chronological folds. 

Gains are defined as $g = n - t,$
where $n$ is non-topological model loss and $t$ denotes topology-enhanced model loss. Hence, a positive log loss or Brier gain indicates an improvement upon the non-topological model by the topology-enhanced model. Here, improvement is a reduction in the relevant loss metric relative to the corresponding non-topological benchmark. Bootstrap gains are computed by pooling the paired market-time loss differences across test folds, with each market-time observation receiving equal weight. To assess uncertainty in these gains (sampling variation in the estimated improvement due to the particular markets and timestamps observed in the dataset) we compute clustered bootstrap confidence intervals (CIs) \cite{EfronTibshirani1993}. We provide market-clustered CIs to test whether improvement persists when resampling is performed across distinct markets. 
\begin{table}[H]
\centering
\footnotesize
\renewcommand{\arraystretch}{1.15}
\begin{tabular}{p{0.34\textwidth} c c c c}
\hline
\textbf{Comparison} & \textbf{LL Gain} & \textbf{Market CI} & \textbf{Brier Gain} & \textbf{Market CI} \\
\hline
Logit VR vs controls 
& 0.0179 & [0.0069, 0.0319] & 0.0052 & [0.0019, 0.0086] \\

XGB MP vs controls 
& 0.0440 & [0.0143, 0.0821] & 0.0152 & [0.0044, 0.0287] \\

RF MP vs controls 
& 0.0797 & [0.0603, 0.1030] & 0.0232 & [0.0143, 0.0318] \\
\hline
\end{tabular}
\caption{Market-clustered bootstrap confidence intervals for topology gains. }
\label{tab:market_clustered_significance}
\end{table}
The market-clustered intervals are positive for our three primary topology comparisons, the strongest result being for the Random Forest MP model, with slightly weaker gains for the XGBoost MP model, followed by the logistic VR model. This shows topological improvement is not simply a row-level artifact. 

Table \ref{tab:timestamp_clustered_significance} examines timestamp-clustered intervals, which are incredibly valuable since market shocks can impact many contracts simultaneously. A strong row-by-row model might be weaker than initially believed if its gains disappear when observations are clustered by timestamp. \begin{table}[H]
\centering
\footnotesize
\renewcommand{\arraystretch}{1.15}
\setlength{\tabcolsep}{4pt}
\begin{tabular}{p{0.25\textwidth} c c c c}
\hline
\textbf{Comparison} & \textbf{LL Gain} & \textbf{Timestamp CI} & \textbf{Brier Gain} & \textbf{Timestamp CI} \\
\hline
Logit VR vs controls 
& 0.0179 & [0.0175, 0.0183] & 0.0052 & [0.0051, 0.0053] \\

XGB MP vs controls 
& 0.0440 & [0.0434, 0.0446] & 0.0152 & [0.0150, 0.0154] \\

RF MP vs controls 
& 0.0797 & [0.0787, 0.0805] & 0.0232 & [0.0229, 0.0235] \\
\hline
\end{tabular}
\caption{Timestamp-clustered bootstrap confidence intervals for topology gains.}
\label{tab:timestamp_clustered_significance}
\end{table} 
Again, models retained positive gains for the same comparisons.

Consistency across chronological folds is yet another valuable metric, showcasing model robustness over time and lack of  performance isolation to particular folds. 
\begin{table}[H]
\centering
\small
\begin{tabular}{lcc}
\hline
\textbf{Comparison} & \textbf{Log-loss folds improved} & \textbf{Brier folds improved} \\
\hline
Logit VR vs controls & 16 / 17 & 15 / 17 \\
XGB MP vs controls & 17 / 17 & 17 / 17 \\
RF MP vs controls & 16 / 17 & 16 / 17 \\
\hline
\end{tabular}
\caption{Fold-level consistency of topology gains across chronological test folds.}
\label{tab:fold_level_significance}
\end{table}
As visible in Table \ref{tab:fold_level_significance}, each topological model improved both log loss and Brier score in most (if not all) folds, with the most consistent improvement coming from XGBoost MP. 
\section{Discussion}
\subsection{Interpretation}
The central result is that prediction market belief states contain compressible forecasting structure beyond individual market prices. The market-implied probability baseline is a strong predictive indicator as the literature suggests \cite{PolymarketAccuracy}, but the two-component family-level PCA model improves on the baseline's Brier score and log loss. Related markets do not move independently, sharing underlying factors which can be compressed into a low-dimensional belief state and used for outcome forecasting. A plausible mechanism for the PCA improvement is that individual market prices respond incompletely or unevenly to broad information shocks. For example, macro-rate expectations, crypto regulation news, or Bitcoin price movement may affect many contracts at once, but not every individual market necessarily incorporates this information at the same speed or magnitude. The PCA state therefore functions as a background regime variable, indicating whether the family-level market environment is moving in a direction consistent with or against a given contract's price. In this interpretation, PCA improves forecasting because it contextualizes that price within the contemporaneous movement of related markets.

We utilize persistent homology as a local-structure enhancement to PCA, with the neighborhood Vietoris-Rips model improving upon the logistic benchmark, and multipersistence-inspired utilization of two filtration scales (correlation distance and PCA residual distance) producing the strongest nonlinear results when coupled with XGBoost and Random Forest models. This multipersistence result suggests that PCA and topology capture inherently distinct market information, with PCA obtaining dominant global linear factors while topological features summarize nonlinear relationships between correlated markets. One conjecture about the cause of this is that the topological features are detecting market-specific disagreement patterns. When a local neighborhood contains markets that are mutually related but not moving in a purely uniform direction, persistent cycles can arise from inconsistent pairwise relationships among those markets. Such configurations may occur when traders have priced in a broad theme, such as crypto momentum or Fed policy, but disagree on how that theme maps onto specific contracts. The forecasting gain may therefore come from identifying neighborhoods where the individual market price is surrounded by structured tension. This would explain why topology is useful after PCA and controls, since PCA captures the dominant shared direction of movement, while topology detects residual relational structure within the local neighborhood.

The calibration results show that despite the strongest nonlinear models achieving the lowest Brier scores and log loss (and hence the best observed predictive performance by our chosen metrics), the logistic single-parameter VR topology model had the most accurate probability forecasting. Because prediction markets are probability-producing systems, calibration is equally important to forecasting accuracy. 
\subsection{Limitations}
There are three important limitations of this paper to consider. Firstly, within the market universe we analyzed there is a class imbalance, with substantially more NO resolutions than YES. However, this is a fact of prediction markets as a whole, with roughly 74\% of markets on Polymarket resolving to NO and 26\% resolving YES \cite{PolymarketAccuracy}, since there typically are more ways for events \textit{not} to happen than \textit{to} happen. Additionally, we account for this imbalance through incorporation of proper scoring rules, calibration checks, and chronological evaluation. 

For robustness, we run all models on a balanced market universe, with an even split of YES and NO-resolving markets. Table \ref{tab:balanced_universe_model_results} depicts these results.
\begin{table}[htbp]
\centering
\small
\begin{tabular}{lccc}
\hline
\textbf{Model} & \textbf{Brier Score} & \textbf{Log Loss} & \textbf{AUC} \\
\hline
Market probability & 0.1660 & 0.5204 & 0.8161 \\
Logit + PCA + controls & 0.1318 & 0.4008 & 0.8854 \\
Logit + PCA + controls + VR & 0.1185 & 0.3604 & 0.9050 \\
XGB + PCA + controls & 0.1189 & 0.3789 & 0.9125 \\
XGB + PCA + controls + MP & 0.0399 & 0.1727 & 0.9900 \\
RF + PCA + controls & 0.0802 & 0.2651 & 0.9524 \\
RF + PCA + controls + MP & 0.0143 & 0.0481 & 0.9916 \\
\hline
\end{tabular}
\caption{Balanced-universe forecasting performance.}
\label{tab:balanced_universe_model_results}
\end{table}

The balanced-universe results follow the same broad pattern as the natural-universe results, but with larger separations between the baseline, PCA/control models, and topology-enhanced nonlinear models. In particular, the strongest model remains RF + PCA + controls + MP, which reduces log loss from the market baseline of $0.5204$ to $0.0481$ and achieves the lowest Brier score of $0.0143$.

Figure \ref{fig:balanced_vs_natural_topology_gains} depicts the topology gains in the natural and balanced universes, with a substantial increase in topological benefit seen in the balanced universe. This indicates that topological analysis may provide even more drastic benefit in market universes with more even historical resolution splits. 
\begin{figure}[H]
    \centering
    \includegraphics[width=0.95\linewidth]{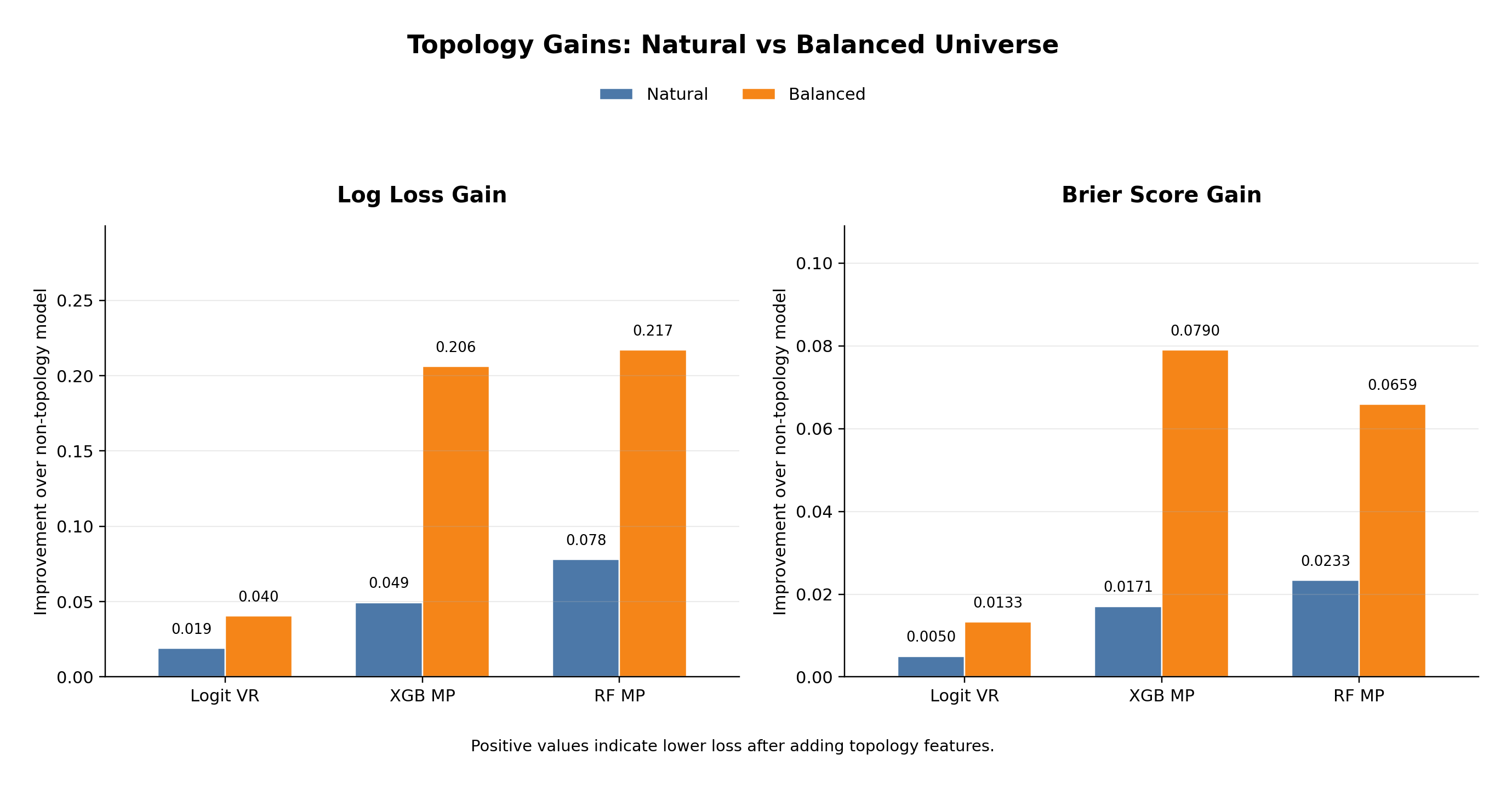}
    \caption{Topology gains on the natural and balanced market universes.}
    \label{fig:balanced_vs_natural_topology_gains}
\end{figure}

The second limitation is that our market universe is centered around macro, crypto, and crypto-adjacent markets. This is an intentional choice since these markets share latent drivers, but results should not be generalized automatically to all prediction markets. Markets with less underlying commonality may not benefit to the same degree from PCA or topological analysis. 

Finally, we utilize Vietoris-Rips topology and multipersistence-inspired reductions, but do not compute entire multiparameter persistence modules or rank invariants in order to keep features computationally manageable and usable in forecasting models. A natural extension would be to test the entire theoretical multipersistence object, computing richer multiparameter invariants or fibered barcode summaries for comparison against the sliced features inspected in this paper.  
\section{Conclusion}
The first significant result obtained is that compressed global family belief states improve forecasting, and that PCA is applicable within prediction markets. Hence, related prediction markets lie on a low-dimensional structure that linear compression can capture in part. 

The second result is that local belief-state topology is useful as a predictive factor. Moreover, multipersistence-inspired topology significantly benefited nonlinear models, allowing XGBoost and Random Forest to achieve exceptionally low log loss and Brier score metrics. 

Altogether, we conclude that application of persistent homology to correlated market neighborhoods, in combination with PCA-based global compression, preserves additional forecasting information that improves both linear and nonlinear model predictions. These findings suggest that prediction market belief states are partially low-dimensional, partially nonlinear, and that topology is highly valuable when utilized as a supplemental descriptive tool for local market geometry.

All code for this project can be accessed at the following repository: \newline https://github.com/ajerfy/Topology-of-Prediction-Market-Belief-Dynamics

\section{Acknowledgements}
I'd like to thank my mentor Elena Isasi Theus for all of her guidance throughout the process of writing this paper, as well as Peter May and the University of Chicago Mathematics REU for hosting fantastic lectures and providing an incredible research environment.

\end{document}